\theoremstyle{plain}
\newtheorem{thm}{Theorem}
\newtheorem{Pp}[thm]{Proposition}
\newtheorem{Co}[thm]{Corollary}
\newtheorem{Lm}[thm]{Lemma}
\newtheorem{Df}[thm]{Definition}
\newtheorem{Rm}[thm]{Remark}
\renewcommand\thesection{\arabic{section}.\kern -.3em}
\renewcommand{\thesubsection}{\arabic{section}.\arabic{subsection}.\kern -.5em}
\newcommand{\Prob} {{\textrm{Prob}}}
\begin{document}
\title{{\bf Exponentially more concise quantum recognition of non-RMM regular languages}}
\author{Daowen Qiu$^{a,b,}$\thanks{Corresponding author.
{\it E-mail address:} issqdw@mail.sysu.edu.cn (D. Qiu).} ,\hskip 2mm  Lvzhou Li$^{a}$,
\hskip 2mm Paulo Mateus$^{b,}$\thanks{
{\it E-mail address:} \{pmat,acs\}@math.ist.utl.pt.} ,  \hskip 2mm Amilcar Sernadas$^{b,\ddag}$\\
\small{{\it $^a$Department of
Computer Science, Sun Yat-sen University, Guangzhou 510006, China}}\\
\small{{\it $^b$SQIG--Instituto de Telecomunica\c{c}\~{o}es, Departamento de Matem\'{a}tica, Instituto Superior T\'{e}cnico, }}\\
\small{{\it  Technical University of Lisbon, Av. Rovisco Pais 1049-001, Lisbon, Portugal}}\\}
\date{ }
\maketitle

\begin{abstract}

We show that there are quantum devices that accept all regular languages and that are exponentially more concise than deterministic finite automata (DFA). For this purpose, we introduce a new computing model of {\it one-way quantum finite automata} (1QFA), namely, {\it one-way quantum finite
automata together with classical states} (1QFAC), which extends naturally both measure-only 1QFA and DFA and whose state complexity is upper-bounded by both. The original contributions of the paper are the following. First, we show that the set of languages accepted by 1QFAC with bounded error consists precisely of all regular languages. Second, we prove that 1QFAC are at most exponentially more concise than DFA. Third, we show that the previous bound is tight for families of regular languages that are not recognized by measure-once (RMO), measure-many (RMM) and multi-letter 1QFA. 
Fourth, we give a polynomial-time algorithm for determining whether any two 1QFAC are equivalent. Finally, we show that state minimization of 1QFAC is decidable within EXPSPACE. We conclude the paper by posing some open problems.
\vskip 2mm 
\noindent
{\bf Keywords:}  Quantum finite
automata; Equivalence; Regular languages; Deterministic finite
automata; Decidability;  State complexity

\vskip 2mm

\end{abstract}

\section{Introduction}

{\it Quantum finite automata} (QFA) can be thought of as a
theoretical model of quantum computers in which the memory is
finite and described by a finite-dimensional state space
\cite{ABGKMT}, as finite automata are a natural model for
classical computing with finite memory \cite{HU79}. As mentioned
in \cite{Hir08}, one of the motivations to study QFA is to provide
some ideas to investigate the relation of classical and quantum
computational complexity classes. This kind of theoretical models
was firstly studied by Moore and Crutchfield \cite{MC00}, Kondacs
and Watrous \cite{KW97}, and then Ambainis and Freivalds
\cite{AF98}, Brodsky and Pippenger \cite{BP02}, and other authors
(e.g., the references in \cite{QL08}). The study of QFA is mainly divided into
two ways: one is {\it one-way quantum finite automata} (1QFA)
whose tape heads move one cell only to right at each evolution,
and the other {\it two-way quantum finite automata} (2QFA), in
which the tape heads are allowed to move towards right or left, or
to be stationary.  According to the measurement times in a computation,
1QFA have two types: {\it measure-once} 1QFA (MO-1QFA) initiated
by Moore and Crutchfield \cite{MC00} and {\it measure-many} 1QFA
(MM-1QFA) studied first by Kondacs and Watrous \cite{KW97}. In
MO-1QFA, there is only a measurement for computing each input
string, performing after reading the last symbol;  in contrast, in
MM-1QFA, measurement is performed after reading each symbol,
instead of only the last symbol. Notably, QFA have been applied to quantum interactive proof systems \cite{NY09}.

MM-1QFA can accept more languages than MO-1QFA with bounded
error \cite{AF98}, but both of them accept proper subsets of regular
languages \cite{BP02,BMP03}. Another model of 1QFA with a
measurement is called {\it multi-letter} 1QFA, proposed in
\cite{BRS07}. In multi-letter 1QFA, there are multi-reading heads.
Roughly speaking, a $k$-letter 1QFA is not limited to seeing only
one, the just-incoming input letter,  but can see several earlier
received letters as well. Though multi-letter 1QFA can accept some
regular languages not acceptable by MM-1QFA, they still accept a
proper subset of regular languages. Nevertheless, as Ambainis et al \cite{AN09} mentioned, sufficient general 1QFA can indeed accept the same set of languages as DFA,
for example, 1QFA {\it with control languages} (1QFACL, for short)
proposed in \cite{BMP03} accept all regular languages (and only
regular languages) \cite{BMP03,MP06}. However, the measurements in
1QFACL differ from those in MM-1QFA proposed in \cite{KW97}.

Paschen \cite{Pas00} presented a different 1QFA by adding some ancilla qubits to
avoid the restriction of unitarity, and this model is called an ancilla 1QFA. Indeed, in ancilla 1QFA, the transition function  corresponding to every input symbol is described by  an isometry  mapping, instead of a unitary operator. In \cite{Pas00}, it was proved that ancilla
1QFA can recognize any regular language with certainty.  With the idea in Bennett \cite{Ben73},
Ciamarra  \cite{Cia01} proposed another model of 1QFA
 whose computational power was shown to be  at least equal to
that of classical automata. For convenience,  we call the 1QFA
defined in \cite{Cia01} as {\it Ciamarra  1QFA} named after the
author. In fact, the internal state of a Ciamarra  1QFA evolves by a
trace-preserving quantum operation. In addition, in \cite{LQZLW09} it was proved that both ancilla 1QFA and Ciamarra  1QFA recognize only regular languages. Recently, it was proved that MO-1QFA and MM-1QFA with mixed states and trace-preserving quantum operations, instead of unitary operators, as the evolutions of states, can accept all and only regular languages \cite{LQZLW09}

These 1QFA indicated above can accept all regular languages, but their architectures are much more complicated than MO-1QFA, and more difficult to be implemented physically with present technology.
Hence, proposing and exploring practical models of
quantum computation is an important research problem and provide relevant insights
to study physical models of quantum computers. Indeed,
 motivated by the implementations of quantum
computers using nucleo-magnetic resonance (NMR), Ambainis et al. \cite{ABGKMT}
proposed another model of 1QFA, namely, {\em Latvian} 1QFA (L-1QFA, for short). In
L-1QFA, measurement is also allowed after reading each input
symbol, but they accept a proper subset of regular languages
\cite{ABGKMT}. Notably,  the languages recognized with unbounded error by
QFA have been discussed in \cite{YC08,YC10}.

Though ancilla 1QFA and Ciamarra  1QFA can accept all regular languages, their evolution operators of states are general quantum operations instead of unitary operators.
1QFA with pure states and unitary evolutions usually have less recognition power than
{\em deterministic finite automata} (DFA) due to the unitarity (reversibility) of
quantum physics and the finite memory of finite automata. 1QFACL can accept all regular languages but their measurement is quite complicated.
However, one would expect a quantum variant to exceed (or at least to be not weaker than) the corresponding classical computing model, and such quantum computing models are practical and feasible as well. For this reason, we think that a quantum computer should inherit the characteristics of classical computers but further advance classical component by employing quantum mechanics principle.

Motivated by this idea, we propose a new model of quantum automata including a classical component, i.e., we reformulate the definition of this new model of MO-1QFA,
 namely, 1QFA {\em together with classical states} (1QFAC, for
short), and in particular, we investigate some of the basic properties of this new model. As MO-1QFA \cite{MC00,BP02}, 1QFAC execute only a
measurement for computing each input string, following the last
symbol scanned.  In this new model, we preserve the component of
DFA that is used to control
the choice of unitary transformation for scanning each input
symbol. We now describe roughly a 1QFAC ${\cal A}$ computing an input
string, delaying the details until Section 2.

At start up, automaton ${\cal A}$ is in an initial classical state and in an initial
quantum state. By reading the first input symbol, the
classical transformation results in a new classical state as
current state, and,  the initial classical state together with current input symbol assigns a
unitary transformation to process the initial quantum state, leading to a new quantum state as current state.
Afterwards, the machine reads the next input symbol, and similar
to the above process, its classical state will be updated by
reading the current input symbol and, at the same time, with the current classical
state and input symbol, a new unitary transformation is assigned to
execute the current quantum state. Subsequently, it continues to
operate for the next step, until the last input symbol has been
scanned. According to the last classical state, a measurement is assigned to perform on the final quantum
state, producing a result of accepting or rejecting the input
string.

Therefore, a 1QFAC performs only one measurement for
computing each input string, doing so after reading the last symbol. However, the measurement is chosen according to the last classical state reached after scanning the input string.
Thus, when a 1QFAC has only one classical state, it reduces to an MO-1QFA \cite{MC00,BP02}. On the one hand, 1QFAC model develops MO-1QFA by adding DFA's component, and on the other hand, 1QFAC advance DFA by employing the fundamentals of quantum mechanics.

We want to stress that 1QFAC are not the one-way version of  {\it two-way finite automata with quantum and
classical states} (2QCFA for short) proposed by Ambains and Watrous \cite{AW02}, and this version has been preliminarily  considered in \cite{ZQL12}. One
of the differences is that, according to the definition of 2QCFA \cite{AW02},
in the one-way version of 2QCFA, after
the tape head reads an input symbol, either a measurement or a
unitary transformation is performed, while in 1QFAC there is no
intermediate measurement, and a single measurement is performed only after scanning the input string.

Though 1QFAC make only one measurement for computing each input
string and the evolutions of states are unitary instead of general operations,  the set of languages accepted by 1QFAC (with no error)
consists precisely of all regular languages. As we know, the set of
languages accepted by 1QFACL is constituted by all regular
languages \cite{MP06}, but 1QFACL need measurement after reading
each input symbol and the measurement is not only restricted to
accepting, rejecting, and non-halting, but also other results
related to the control language attached to the machine.
Therefore, the computing process of a 1QFACL is usually much more
complicated than that of a 1QFAC. On the other hand, measuring may
lead to more errors for the machine.

Since 1QFA do not have more power than DFA in terms of accepting languages, it is more important to discover the space-efficiency of 1QFA compared with other one-way automata. The first  important result is by Ambainis and Freivalds \cite{AF98}, who proved that MO-1QFA need exponentially less number of states than DFA for accepting some languages. (Recently, Ambainis and Nahimovs \cite{AN09} have further improved this result.)

In \cite{BMP05,MP02, MP06,MP07}, Bertoni, Mereghetti, and Palano further proved that MO-1QFA have space-efficient advantage over DFA for accepting some languages. As mentioned before, MO-1QFA with mixed states and general quantum operations can accept all regular languages. Indeed, Freivalds et al. \cite{FOM09} proved that MO-1QFA with mixed states are super-exponentially more concise than MO-1QFA with pure states.

It should be stressed that 1QFAC can accept regular languages
with exponentially less states than the corresponding DFA
\cite{HU79}, and for which there is no MO-1QFA \cite{MC00}, nor MM-1QFA
\cite{KW97}, nor multi-letter 1QFA \cite {BRS07} that can accept
them with bounded error. Hence, in a way, 1QFAC can be
thought of as a more practical model of QFA, showing better state
complexity than DFA due to its quantum computing component, and
stronger recognition power of languages than MO-1QFA, MM-1QFA, and
multi-letter 1QFA. Furthermore, for accepting the same regular
language, we will show that 1QFAC have better state complexity
than 1QFACL.

\subsection*{Original contributions}

The main technical contributions of the paper contain five
aspects. In Section 2, after reviewing some existing 1QFA models,
we  define 1QFAC formally and then prove that the set of languages
accepted by 1QFAC with bounded error includes all
regular languages.

Then, in Section 3, we study the state complexity of 1QFAC. We prove that if $L$ is accepted by a 1QFAC ${\cal M}$ with bounded error, then $L$ is regular and $kn=\Omega( log ~m)$ where $k$ and $n$ denote  numbers of classical states and quantum basis states of ${\cal M}$, respectively, and $m$ is the state number of the minimal DFA accepting $L$. Then, we verify this bound is indeed tight, since we further show that, for any  prime $m\geq 2$, there exists a regular language $L_m$ whose minimal DFA needs $m+1$ states and neither MO-1QFA nor MM-1QFA can accept $L_m$, but there exists a 1QFAC accepting $L_m$ with only two classical states and $O(\log(m))$ quantum basis states. In addition, we show that, for any $m\geq 2$, and any input string $z$, there exists a regular language $L_z(m)$ that can not be accepted by any multi-letter 1QFA or MO-1QFA, but there exists a 1QFAC ${\cal A}_m$ accepting it with only 2 classical states and $O(\log(m))$ quantum basis states. In contrast, the minimal DFA accepting $L_z(m)$ has $(|z|+1)m$ states, where $|z|$ denotes the length of $z$.

In Section 4, we study the equivalence problem of 1QFAC. Any two
1QFAC ${\cal A}_1$ and ${\cal A}_2$ over the same input alphabet
$\Sigma$ are equivalent (resp. $k$-equivalent) iff their
probabilities for accepting any input string (resp. length not
more than $k$) are equal. We reformulate any given 1QFAC with a
bilinear computing machine. According to
\cite{Rab63,Paz71,Tze92}, it follows that 1QFAC ${\cal
A}_1$ and ${\cal A}_2$ over the same input alphabet $\Sigma$ are
equivalent if and only if they are $(k_{1}n_{1})^2+
(k_{2}n_{2})^2-1$-equivalent, and furthermore there exists a polynomial-time
$O([(k_{1}n_{1})^2+ (k_{2}n_{2})^2]^4)$ algorithm for determining
their equivalence, where $k_{1}$ and $k_{2}$ are the
numbers of classical states of ${\cal A}_1$ and ${\cal A}_2$, as well as $n_{1}$ and
$n_{2}$ are the numbers of quantum basis states of ${\cal
A}_{1}$ and ${\cal A}_{2}$, respectively.

Finally, in Section 5, we show that minimization of a 1QFAC is decidable in EXPSPACE. To this end, we capitalize on the results of Section 4 and on Renegard's algorithm  \cite{Ren88} for sampling semialgebraic sets.

In general, notation used in this paper will be explained whenever
new symbols appear.  A language $L$ over alphabet $\Sigma$ is accepted by a computing model with {\em bounded error} if there exist $\lambda>0$ and $\epsilon>0$ such that the accepting probability for $x\in L$ is at least $\lambda+\epsilon$ and the accepting probability for $x\not\in L$ is at most $\lambda-\epsilon$.  In this paper, we always consider the accepting scheme of machines to be bounded error unless we emphasize otherwise. Throughout this paper, the notation $\|. \|$ denotes the Euclid norm of a vector.

\section{One-way quantum finite automata together with classical states}

In this section, we introduce the definition of 1QFAC and then
prove its recognition power of languages.  For the sake of readability, we first recall the definitions of MO-1QFA, MM-1QFA,  multi-letter 1QFA, and 1QFACL.

\subsection{Review of other one-way quantum finite automata}

An MO-1QFA is defined as a quintuple
${\cal A}=(Q, \Sigma, |\psi_{0}\rangle,\{U(\sigma)\}_{\sigma\in\Sigma},Q_{acc})$,
where $Q$ is a set of finite states, $|\psi_{0}\rangle$ is the initial state
that is a superposition of the states in $Q$, $\Sigma$ is a finite
input alphabet, $U(\sigma)$ is a unitary matrix for each
$\sigma\in\Sigma$, and $Q_{acc}\subseteq Q$ is the
set of accepting states.

As usual, we identify $Q$ with an orthonormal base of a complex
Euclidean space and every state $q\in Q$ is identified with a
basis vector, denoted by Dirac symbol $|q\rangle$ (a column
vector), and $\langle q|$ is the conjugate transpose of
$|q\rangle$. We describe the computing process for any given input
string $x=\sigma_{1}\sigma_{2}\cdots\sigma_{m}\in\Sigma^{*}$. At
the beginning the machine ${\cal A}$ is in the initial state
$|\psi_{0}\rangle$, and upon reading $\sigma_{1}$, the
transformation $U(\sigma_{1})$ acts on $|\psi_{0}\rangle$. After
that, $U(\sigma_{1})|\psi_{0}\rangle$ becomes the current state
and the machine reads $\sigma_{2}$. The process continues until
the machine has read $\sigma_{m}$ ending in the state
$|\psi_{x}\rangle=U(\sigma_{m})U(\sigma_{m-1})\cdots
U(\sigma_{1})|\psi_{0}\rangle$. Finally, a measurement is
performed on $|\psi_{x}\rangle$ and the accepting probability
$p_{a}(x)$ is equal to
\[
p_{a}(x)=\langle\psi_{x}|P_{a}|\psi_{x}\rangle=\|P_{a}|\psi_{x}\rangle\|^{2}
\]
where $P_{a}=\sum_{q\in Q_{acc}}|q\rangle\langle q|$ is the
projection onto the subspace spanned by $\{|q\rangle: q\in
Q_{acc}\}$.

Now we further recall the definition of multi-letter QFA
\cite{BRS07}.

A $k$-letter 1QFA ${\cal A}$ is defined as a quintuple ${\cal
A}=(Q,\Sigma, |\psi_{0}\rangle, \nu,Q_{acc})$ where $Q$, $|\psi_{0}\rangle$, $\Sigma$, $Q_{acc}\subseteq Q$, are the same as those in MO-1QFA above, and $\nu$ is a function that assigns a unitary
transition matrix $U_{w}$ on $\mathbb{C}^{|Q|}$ for each string
$w\in (\{\Lambda\}\cup\Sigma)^{k}$, where $|Q|$ is the cardinality
of $Q$.

The computation of a $k$-letter 1QFA ${\cal A}$ works in the same
way as the computation of an MO-1QFA, except that it applies
unitary transformations corresponding not only to the last letter
but the last $k$ letters received. When
$k=1$, it is exactly an MO-1QFA as defined before. According
to \cite{BRS07,QY09}, the languages accepted by $k$-letter 1QFA  are a proper subset of regular languages for any $k$.

An MM-1QFA is defined as a 6-tuple
${\cal A}=(Q,\Sigma,|\psi_{0}\rangle,\{U(\sigma)\}_{\sigma\in\Sigma\cup
\{\$\}},Q_{acc},Q_{rej})$, where $Q,Q_{acc}\subseteq
Q,|\psi_{0}\rangle,\Sigma,\{U(\sigma)\}_{\sigma\in\Sigma\cup
\{\$\}}$ are the same as those in an MO-1QFA defined above,
$Q_{rej}\subseteq Q$ represents the set of rejecting states, and
$\$\not\in\Sigma$ is a tape symbol denoting the right end-mark.
For any input string
$x=\sigma_{1}\sigma_{2}\cdots\sigma_{m}\in\Sigma^{*}$, the
computing process is similar to that of MO-1QFAs except that after
every transition, ${\cal A}$ measures its state with respect to the three
subspaces that are spanned by the three subsets $Q_{acc},
Q_{rej}$, and $Q_{non}$, respectively, where $Q_{non}=Q\setminus
(Q_{acc}\cup Q_{rej})$. In other words, the projection measurement
consists of $\{P_{a},P_{r},P_{n}\}$ where $P_{a}=\sum_{q\in
Q_{acc}}|q\rangle\langle q|$, $P_{r}=\sum_{q\in
Q_{rej}}|q\rangle\langle q|$, $P_{n}=\sum_{q\in Q\setminus
(Q_{acc}\cup Q_{rej})}|q\rangle\langle q|$. The machine stops
after the right end-mark $\$$ has been read. Of course, the
machine may also stop before reading  $\$$ if the current state, after
the machine reading some $\sigma_{i}$ $(1\leq i\leq m)$, does not
contain the states of $Q_{non}$. Since the measurement is
performed after each transition with the states of $Q_{non}$ being
preserved, the accepting probability $p_{a}(x)$ and the rejecting
probability $p_{r}(x)$ are given as follows (for convenience, we
denote $\$=\sigma_{m+1}$):
\[
p_{a}(x)=\sum_{k=1}^{m+1}\|P_{a}U(\sigma_{k})\prod_{i=1}^{k-1}(P_{n}U(\sigma_{i}))|\psi_{0}\rangle\|^{2},
\]
\[
p_{r}(x)=\sum_{k=1}^{m+1}\|P_{r}U(\sigma_{k})\prod_{i=1}^{k-1}(P_{n}U(\sigma_{i}))|\psi_{0}\rangle\|^{2}.
\]
Here we define $\prod_{i=1}^{n}A_i= A_nA_{n-1}\cdots A_1$.

Bertoni {\it et al} \cite{BMP03}
introduced a 1QFA, called 1QFACL that allows a more
general measurement than the previous  models. Similar to the case
in MM-1QFA, the state of this model can be observed at each step,
but an observable ${\cal O}$ is considered with a fixed, but
arbitrary, set of possible results ${\cal C}=\{c_1,\dots,c_n\}$,
without limit to $\{a,r,g\}$ as in MM-1QFA. The accepting
behavior in this model is also different from that of the previous
models. On any given input word $x$, the computation displays a
sequence $y\in {\cal C}^{*}$ of results of ${\cal O}$ with a
certain probability $p(y|x)$, and the computation is accepted if
and only if $y$ belongs to a fixed regular language ${\cal
L}\subseteq {\cal C}^{*}$.   Bertoni {\it et al}
\cite{BMP03} called  such a language ${\cal L}$ {\it control
language}.

More formally, given an input alphabet $\Sigma$ and the end-marker
  symbol $\$\notin\Sigma$, a 1QFACL over the working
  alphabet $\Gamma=\Sigma\cup\{\$\}$ is a five-tuple ${\cal
  M}=(Q, |\psi_{0}\rangle,\{U(\sigma)\}_{\sigma\in\Gamma},{\cal O},{\cal L})$, where
\begin{itemize}

\item $Q$, $|\psi_{0}\rangle$ and $U(\sigma)$ $(\sigma\in\Gamma)$ are defined
as in the case of MM-1QFA;

\item ${\cal O}$ is an observable with the set of possible results
${\cal C}=\{c_1,\dots,c_s\}$ and  the projector set
$\{P(c_i):i=1,\dots,s\}$ of which $P(c_i)$ denotes the projector
onto the eigenspace corresponding to $c_i$;

\item  ${\cal L}\subseteq{\cal C}^{*}$ is a regular language
(control language).

\end{itemize}

The input word $w$ to 1QFACL ${\cal M}$ is in the form:
$w\in\Sigma^{*}\$$, with symbol $\$$ denoting the end of a word.
Now, we define the behavior of ${\cal M}$ on word $x_1\dots
x_n\$$. The computation starts in the state $|\psi_{0}\rangle$, and then the
transformations associated with the symbols in the word  $x_1\dots
x_n\$$ are applied in succession. The transformation associated
with any symbol $\sigma\in\Gamma$ consists of two steps:
\begin{enumerate}
\item[1.] First, $U(\sigma)$ is applied to the current state
$|\phi\rangle$ of ${\cal M}$, yielding the new state
$|\phi^{'}\rangle=U(\sigma)|\phi\rangle$.

\item[2.] Second, the observable ${\cal O}$ is measured on
$|\phi^{'}\rangle$. According to quantum mechanics principle, this
measurement yields result $c_k$ with probability
$p_k=||P(c_k)|\phi^{'}\rangle||^2$, and the state of ${\cal M}$
collapses to $P(c_k)|\phi^{'}\rangle  /\sqrt{p_k}$.
\end{enumerate}

Thus, the computation on word $x_1\dots x_n\$$ leads to a sequence
$y_1\dots y_{n+1}\in {\cal C}^{*}$ with probability $p(y_1\dots
y_{n+1}|x_1\dots x_n\$)$ given by
\begin{equation}
p(y_1\dots y_{n+1}|x_1\dots
x_n\$)=\|  \prod^{n+1}_{i=1}P(y_i)    U(x_i)              |\psi_{0}\rangle\|^2,
\end{equation}
where we let $x_{n+1}=\$$ as stated before. A computation leading
to the word $y\in {\cal C}^{*}$ is said to be  accepted if $y\in
{\cal L}$. Otherwise, it is rejected. Hence, the accepting probability of 1QFACL ${\cal M}$ is
defined as:
\begin{equation}
{\cal P}_{\cal M}(x_1\dots x_n)=\sum_{y_1\dots y_{n+1}\in {\cal
L}}p(y_1\dots y_{n+1}|x_1\dots x_n\$).\label{f_CL}
\end{equation}

\subsection{One-way quantum finite automata together with classical states}

In Section 1, we gave the motivation for introducing the new one-way quantum finite automata model, i.e., 1QFAC. We now define formally the  model.
To this end, we need the following notations. Given a finite set $B$, we denote by ${\cal H}(B)$ the Hilbert space freely generated by $B$. Furthermore, we denote by $I$ and $O$ the identity operator and zero operator on ${\cal H}(Q)$, respectively.

\begin{Df}\em \label{Df1}
A 1QFAC ${\cal A}$ is defined by a 9-tuple
\[
{\cal A}=(S,Q,\Sigma,\Gamma, s_{0},|\psi_{0}\rangle, \delta,
\mathbb{U}, {\cal M})
\]
where:
\begin{itemize}
\item $\Sigma$  is a finite set (the {\it input alphabet});

\item $\Gamma$  is a finite set (the {\it output alphabet});

\item $S$ is a finite set (the set of {\em classical states});

\item $Q$ is a finite set (the {\em quantum state basis});

\item $s_{0}$ is an element of $S$ (the {\em initial classical state});

\item $|\psi_{0}\rangle$ is a unit vector in the Hilbert space ${\cal H}(Q)$ (the {\em
initial quantum state});

\item $\delta: S\times \Sigma\rightarrow S$ is
a map (the {\em classical transition map});

\item $\mathbb{U}=\{U_{s\sigma}\}_{s\in S,\sigma\in \Sigma}$ where $U_{s\sigma}:{\cal
H}(Q)\rightarrow {\cal H}(Q)$ is a unitary operator for each $s$
and $\sigma$ (the {\em quantum transition operator} at $s$ and $\sigma$);

\item ${\cal M}=\{{\cal M}_s\}_{s\in S}$  where each ${\cal M}_s$ is a projective measurement over ${\cal H}(Q)$ with outcomes in $\Gamma$ (the {\em measurement operator at} $s$).
\end{itemize}
\end{Df}

Hence, each ${\cal M}_s= \{P_{s,\gamma}\}_{\gamma\in \Gamma}$ such that
$\sum_{\gamma\in \Gamma}P_{s,\gamma}=I$ and
$P_{s,\gamma}P_{s,\gamma'}=\left\{\begin{array}{ll}P_{s,\gamma},&
\gamma=\gamma',\\
O,& \gamma\not=\gamma'.
\end{array}
\right.$
Furthermore, if the machine is
in classical state $s$ and quantum state $|\psi\rangle$ after
reading the input string, then  $\|P_{s,\gamma}|\psi\rangle\|^{2}$
is the probability of the machine producing
outcome $\gamma$ on that input.

\begin{Rm}\em
Map
$\delta$ can be extended to a map $\delta^{*}:
\Sigma^{*}\rightarrow S$ as usual. That is,
$\delta^{*}(s,\epsilon)=s$; for any string $x\in\Sigma^{*}$ and
any $\sigma\in \Sigma$, $\delta^{*}(s,\sigma x)=
\delta^{*}(\delta(s,\sigma),x)$.

\end{Rm}

\begin{Rm}\em  A specially interesting case of the above definition is when $\Gamma=\{a, r\}$,
    where $a$ denotes  {\em accepting} and $r$ denotes {\em rejecting}.
Then, ${\cal M}=\{\{P_{s,a},P_{s,r}\}:s\in S\}$ and, for each
$s\in S$, $P_{s,a}$ and $P_{s,r}$ are two projectors such that $P_{s,a}+P_{s,r}=I$ and $P_{s,a}P_{s,r}=
O$. In this case, ${\cal A}$ is an
acceptor of languages over $\Sigma$.
\end{Rm}

For the sake of convenience, we denote the map $\mu : \Sigma^{*}
\rightarrow S$, induced by $\delta$, as
$\mu(x)=\delta^{*}(s_{0},x)$ for any string $x\in\Sigma^{*}$.

We further describe the computing process of ${\cal
A}=(S,Q,\Sigma, s_{0},|\psi_{0}\rangle, \delta,
\mathbb{U},{\cal M})$ for input string
$x=\sigma_{1}\sigma_{2}\cdots\sigma_{m}$ where $\sigma_{i}\in
\Sigma$ for $i=1,2,\cdots,m$.

The machine ${\cal A}$ starts at the
initial classical state $s_{0}$ and initial quantum state
$|\psi_{0}\rangle$. On reading the first symbol $\sigma_{1}$ of the input string, the states of the machine change as follows: the classical state becomes
$\mu(\sigma_{1})$; the quantum state becomes $U_{s_{0}\sigma_{1}}|\psi_{0}\rangle$.
Afterward, on reading $\sigma_{2}$, the machine changes its classical state to $\mu(\sigma_{1}\sigma_{2})$ and its quantum state to the result of applying $U_{\mu(\sigma_{1})\sigma_{2}}$ to
$U_{s_{0}\sigma_{1}}|\psi_{0}\rangle$.

The process continues
similarly by reading $\sigma_{3},\sigma_{4},\cdots,\sigma_{m}$ in succession.
Therefore, after reading $\sigma_{m}$, the classical state becomes
$\mu(x)$ and the quantum state is as follows:
\begin{equation}
U_{\mu( \sigma_{1}\cdots \sigma_{m-2}\sigma_{m-1} )\sigma_{m}}U_{\mu(\sigma_{1}\cdots \sigma_{m-3}\sigma_{m-2})\sigma_{m-1}}\cdots U_{\mu(\sigma_{1})\sigma_{2}}
U_{s_{0}\sigma_{1}}|\psi_{0}\rangle.
\end{equation}

Let ${\cal U}(Q)$ be the set of unitary operators on Hilbert space
${\cal H}(Q)$. For the sake of convenience, we denote the map
$v:\Sigma^{*}\rightarrow {\cal U}(Q)$ as: $v(\epsilon)=I$ and
\begin{equation}
v(x)=
U_{\mu( \sigma_{1}\cdots \sigma_{m-2}\sigma_{m-1} )\sigma_{m}}U_{\mu(\sigma_{1}\cdots \sigma_{m-3}\sigma_{m-2})\sigma_{m-1}}\cdots U_{\mu(\sigma_{1})\sigma_{2}}
U_{s_{0}\sigma_{1}} \label{v}
\end{equation}
for $x=\sigma_{1}\sigma_{2}\cdots\sigma_{m}$ where $\sigma_{i}\in
\Sigma$ for $i=1,2,\cdots,m$, and $I$ denotes the identity
operator on ${\cal H}(Q)$, indicated as before.

By means of the denotations $\mu$ and $v$, for any input string
$x\in\Sigma^{*}$, after ${\cal A}$ reading $x$, the classical
state is $\mu(x)$ and the quantum states $v(x)|\psi_{0}\rangle$.

Finally,  the probability $\Prob_{{\cal A},\gamma}(x)$ of machine
${\cal A}$ producing result $\gamma$ on input $x$  is as follows:
\begin{equation}
\Prob_{{\cal A},\gamma}(x)= \|P_{\mu(x),\gamma}v(x)|\psi_{0}\rangle\|^{2}.
\end{equation}

In particular, when ${\cal A}$ is thought of as an acceptor of
languages over $\Sigma$ ($\Gamma=\{a,r\}$), we obtain the
probability $\Prob_{{\cal A},a}(x)$ for accepting $x$:
\begin{equation}
\Prob_{{\cal A},a}(x)= \|P_{\mu(x),a}v(x)|\psi_{0}\rangle\|^{2}.
\end{equation}

For intuition, we depict the above process in Figure~\ref{fig:1qfacdyn}.

\begin{figure}[htbp]
    \centering
        \ifx\JPicScale\undefined\def\JPicScale{0.9}\fi
        \unitlength \JPicScale mm

        \begin{picture}(155,70)(10,20)
        \linethickness{0.3mm}
        \put(40,50){\line(0,1){10}}
        \linethickness{0.3mm}
        \put(50,50){\line(0,1){10}}
        \linethickness{0.3mm}
        \put(10,40){\line(1,0){10}}
        \put(20,30){\line(0,1){10}}
        \put(10,30){\line(1,0){10}}
        \put(10,30){\line(0,1){10}}
        \linethickness{0.3mm}
        \put(10,80){\line(1,0){10}}
        \put(20,70){\line(0,1){10}}
        \put(10,70){\line(1,0){10}}
        \put(10,70){\line(0,1){10}}
        \put(15,35){\makebox(0,0)[cc]{$s_0$}}

        \put(15,75){\makebox(0,0)[cc]{\small$|\psi_0\rangle$}}

        \linethickness{0.3mm}
        \put(30,80){\line(1,0){10}}
        \put(40,70){\line(0,1){10}}
        \put(30,70){\line(1,0){10}}
        \put(30,70){\line(0,1){10}}
        \linethickness{0.3mm}
        \put(20,35){\line(1,0){10}}
        \put(30,35){\vector(1,0){0.12}}
        \linethickness{0.3mm}
        \put(35,45){\line(0,1){5}}
        \linethickness{0.3mm}
        \put(25,45){\line(1,0){10}}
        \linethickness{0.3mm}
        \put(25,35){\line(0,1){10}}
        \put(25,35){\vector(0,-1){0.12}}
        \linethickness{0.3mm}
        \put(40,35){\line(1,0){10}}
        \put(50,35){\vector(1,0){0.12}}
        \linethickness{0.3mm}
        \put(30,40){\line(1,0){10}}
        \put(40,30){\line(0,1){10}}
        \put(30,30){\line(1,0){10}}
        \put(30,30){\line(0,1){10}}
        \put(35,35){\makebox(0,0)[cc]{$s_1$}}

        \put(25,30){\makebox(0,0)[cc]{$\delta$}}

        \linethickness{0.3mm}
        \put(20,75){\line(1,0){10}}
        \put(30,75){\vector(1,0){0.12}}
        \linethickness{0.3mm}
        \linethickness{0.3mm}
        \put(15,40){\line(0,1){25}}
        \linethickness{0.3mm}
        \put(15,65){\line(1,0){10}}
        \linethickness{0.3mm}
        \put(25,65){\line(0,1){10}}
        \put(25,75){\vector(0,1){0.12}}
        \linethickness{0.3mm}
        \put(35,60){\line(0,1){5}}
        \linethickness{0.3mm}
        \put(25,65){\line(1,0){10}}
        \put(25,80){\makebox(0,0)[cc]{$U$}}

        \linethickness{0.3mm}
        \put(100,40){\line(1,0){10}}
        \put(110,30){\line(0,1){10}}
        \put(100,30){\line(1,0){10}}
        \put(100,30){\line(0,1){10}}
        \linethickness{0.3mm}
        \put(110,35){\line(1,0){10}}
        \put(120,35){\vector(1,0){0.12}}
        \linethickness{0.3mm}
        \put(125,45){\line(0,1){5}}
        \linethickness{0.3mm}
        \put(115,45){\line(1,0){10}}
        \linethickness{0.3mm}
        \put(115,35){\line(0,1){10}}
        \put(115,35){\vector(0,-1){0.12}}
        \linethickness{0.3mm}
        \put(120,40){\line(1,0){10}}
        \put(130,30){\line(0,1){10}}
        \put(120,30){\line(1,0){10}}
        \put(120,30){\line(0,1){10}}
        \linethickness{0.3mm}
        \linethickness{0.3mm}
        \put(120,50){\line(0,1){10}}
        \linethickness{0.3mm}
        \put(100,80){\line(1,0){10}}
        \put(110,70){\line(0,1){10}}
        \put(100,70){\line(1,0){10}}
        \put(100,70){\line(0,1){10}}
        \linethickness{0.3mm}
        \put(120,80){\line(1,0){10}}
        \put(130,70){\line(0,1){10}}
        \put(120,70){\line(1,0){10}}
        \put(120,70){\line(0,1){10}}
        \linethickness{0.3mm}
        \put(110,75){\line(1,0){10}}
        \put(120,75){\vector(1,0){0.12}}
        \linethickness{0.3mm}
        \put(105,65){\line(1,0){10}}
        \linethickness{0.3mm}
        \put(115,65){\line(0,1){10}}
        \put(115,75){\vector(0,1){0.12}}
        \linethickness{0.3mm}
        \put(125,60){\line(0,1){5}}
        \linethickness{0.3mm}
        \put(115,65){\line(1,0){10}}
        \linethickness{0.3mm}
        \put(105,40){\line(0,1){25}}
        \put(35,75){\makebox(0,0)[cc]{\small$|\psi_1\rangle$}}

        \put(105,75){\makebox(0,0)[cc]{\small$|\psi_{\textrm{\tiny$n\!\!-\!\!1\!$}}\rangle$}}

        \put(125,75){\makebox(0,0)[cc]{\small$|\psi_n\rangle$}}

        \put(105,35){\makebox(0,0)[cc]{$s_{n-1}$}}

        \put(125,35){\makebox(0,0)[cc]{$s_n$}}

        \put(115,30){\makebox(0,0)[cc]{$\delta$}}

        \put(115,80){\makebox(0,0)[cc]{$U$}}

        \put(135,80){\makebox(0,0)[cc]{$M$}}

        \linethickness{0.3mm}
        \put(40,75){\line(1,0){10}}
        \put(50,75){\vector(1,0){0.12}}
        \linethickness{0.3mm}
        \put(90,35){\line(1,0){10}}
        \put(100,35){\vector(1,0){0.12}}
        \linethickness{0.3mm}
        \put(90,75){\line(1,0){10}}
        \put(100,75){\vector(1,0){0.12}}
        \linethickness{0.3mm}
        \put(30,60){\line(1,0){70}}
        \linethickness{0.3mm}
        \put(30,50){\line(0,1){10}}
        \linethickness{0.3mm}
        \put(30,50){\line(1,0){70}}
        \linethickness{0.3mm}
        \put(110,60){\line(1,0){20}}
        \linethickness{0.3mm}
        \put(130,50){\line(0,1){10}}
        \linethickness{0.3mm}
        \put(110,50){\line(1,0){20}}
        \put(75,35){\makebox(0,0)[cc]{...}}

        \put(75,55){\makebox(0,0)[cc]{...}}

        \put(75,75){\makebox(0,0)[cc]{...}}

        \put(115,55){\makebox(0,0)[cc]{...}}

        \put(35,55){\makebox(0,0)[cc]{$\sigma_0$}}

        \put(45,55){\makebox(0,0)[cc]{$\sigma_1$}}

        \put(125,55){\makebox(0,0)[cc]{$\sigma_n$}}

        \linethickness{0.3mm}
        \put(130,35){\line(1,0){5}}
        \linethickness{0.3mm}
        \put(135,35){\vector(0,1){40}}
        \linethickness{0.3mm}
        \put(130,75){\line(1,0){10}}
        \put(140,75){\vector(1,0){0.12}}
        \linethickness{0.3mm}
        \put(140,75){\line(0,1){5}}
        \linethickness{0.3mm}
        \put(140,80){\line(1,0){5}}
        \put(157,80){\makebox(0,0)[cc]{Accept}}

        \put(157,70){\makebox(0,0)[cc]{Reject}}

        \linethickness{0.3mm}
        \put(145,80){\line(1,0){5}}
        \put(150,80){\vector(1,0){0.12}}
        \linethickness{0.3mm}
        \put(140,70){\line(0,1){5}}
        \linethickness{0.3mm}
        \put(140,70){\line(1,0){10}}
        \put(150,70){\vector(1,0){0.12}}
        \end{picture}

    \caption{1QFAC dynamics as an acceptor of languages}
    \label{fig:1qfacdyn}
\end{figure}
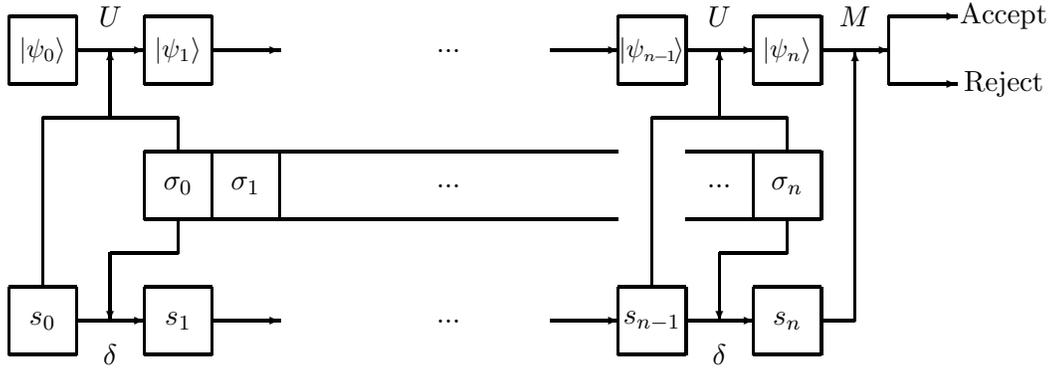

\begin{Rm}\em  If a 1QFAC ${\cal A}$ has only one classical state,
then ${\cal A}$ reduces to an MO-1QFA \cite{MC00}. Therefore, the
set of languages accepted by  1QFAC with only one classical state
is a proper subset of regular languages (exactly, the languages
whose syntactic monoid is a group  \cite{BP02}). However, we prove that 1QFAC
can accept all regular languages with no error.
\end{Rm}

\begin{Pp} Let $\Sigma$ be a finite set. Then each regular language over  $\Sigma$ that is accepted by a minimal DFA of $k$ states
is also accepted by some 1QFAC with no error and with 1 quantum basis state and $k$ classical states.
\end{Pp}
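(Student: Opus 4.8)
The plan is to embed the given minimal DFA directly into the classical component of a 1QFAC, exploiting the fact that a single quantum basis state renders the quantum dynamics trivial. Let $M=(S,\Sigma,\delta,s_{0},F)$ be the minimal DFA with $|S|=k$ that accepts $L$, where $F\subseteq S$ is its set of accepting states. I would construct a 1QFAC ${\cal A}=(S,Q,\Sigma,\Gamma,s_{0},|\psi_{0}\rangle,\delta,\mathbb{U},{\cal M})$ that keeps $S$, $\delta$, and $s_{0}$ exactly as in $M$, sets $\Gamma=\{a,r\}$, and takes $Q=\{q_{0}\}$ to be a single basis state, so that ${\cal H}(Q)\cong\mathbb{C}$ is one-dimensional. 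The initial quantum state is then the unique unit vector $|\psi_{0}\rangle=|q_{0}\rangle$ (up to an irrelevant global phase).

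Next I would make the quantum transitions inert by setting $U_{s\sigma}=I$ for every $s\in S$ and $\sigma\in\Sigma$. Since a unitary on a one-dimensional space is just multiplication by a modulus-one scalar, this is the natural (and essentially only meaningful) choice, and it guarantees that $v(x)|\psi_{0}\rangle=|q_{0}\rangle$ remains a unit vector for every input $x$. Finally, I would encode the acceptance criterion of $M$ into the classical-state-dependent measurement family ${\cal M}=\{{\cal M}_{s}\}_{s\in S}$: for $s\in F$ put $P_{s,a}=I$ and $P_{s,r}=O$, while for $s\notin F$ put $P_{s,a}=O$ and $P_{s,r}=I$. In either case $\{P_{s,a},P_{s,r}\}$ is a legitimate projective measurement satisfying $P_{s,a}+P_{s,r}=I$ and $P_{s,a}P_{s,r}=O$, as required by Definition~\ref{Df1}.

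It then remains to verify recognition with no error. Using the notation of the excerpt, after reading $x$ the machine is in classical state $\mu(x)=\delta^{*}(s_{0},x)$, which by construction is exactly the DFA state reached on $x$ (including the case $x=\epsilon$, where $\mu(\epsilon)=s_{0}$), and in quantum state $v(x)|\psi_{0}\rangle=|q_{0}\rangle$. The accepting probability is $\Prob_{{\cal A},a}(x)=\|P_{\mu(x),a}\,v(x)|\psi_{0}\rangle\|^{2}=\|P_{\mu(x),a}|q_{0}\rangle\|^{2}$, which equals $1$ if $\mu(x)\in F$ and $0$ otherwise. Hence ${\cal A}$ accepts $x$ with probability $1$ precisely when $M$ accepts $x$, that is when $x\in L$, and with probability $0$ for $x\notin L$; so ${\cal A}$ recognizes $L$ with no error using $1$ quantum basis state and $k$ classical states.

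I do not expect any genuine obstacle here, since the construction is essentially forced. The only point worth stating carefully is that a one-dimensional quantum component contributes nothing computationally: all the recognition work is carried out by the classical states and by the classical-state-dependent final measurement. This is exactly why the classical-state count of the resulting 1QFAC matches the DFA state count $k$, and it also explains, conversely, why genuine savings in state complexity (treated in later sections) can only arise once the quantum basis is allowed to grow beyond a single state.
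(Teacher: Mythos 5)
Your construction is exactly the one the paper uses: keep the DFA's classical component intact, take a single quantum basis state with identity unitaries, and let the measurement $\{P_{s,a},P_{s,r}\}$ encode membership of $s$ in $F$. The proposal is correct and even spells out the verification that the paper leaves as "easy to check."
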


\begin{proof}
Let $L\subseteq \Sigma^{*}$ be a regular language. Then there
exists a DFA
$M=(S,\Sigma,\delta,s_{0},F)$ accepting $L$, where, as usual, $S$
is a finite set of states, $s_{0}\in S$ is an initial state,
$F\subseteq Q$ is a set of accepting states, and $\delta: Q\times
\Sigma\rightarrow Q$ is the transition function. We construct a
1QFAC ${\cal A}=(S,Q,\Sigma,\Gamma, s_{0},|\psi_{0}\rangle,
\delta, \mathbb{U}, {\cal M})$ accepting $L$ without error, where
$S$,  $\Sigma$, $s_{0}$, and $\delta$ are the same as those in
$M$, and, in addition, $\Gamma=\{a,r\}$, $Q=\{0\}$,
$|\psi_{0}\rangle=|0\rangle$, $\mathbb{U}=\{U_{s\sigma}: s\in
S,\sigma\in\Sigma\}$ with $U_{s\sigma}=I$ for all $s\in S$ and
$\sigma\in\Sigma$, ${\cal M}=\{\{P_{s,a},P_{s,r}\}:s\in S\}$
assigned as: if $s\in F$, then $P_{s,a}=|0\rangle\langle 0|$ and
$P_{s,r}=O$ where $O$ denotes the zero operator as before; otherwise,
$P_{s,a}=O$ and $P_{s,r}=|0\rangle\langle 0|$.

By the above definition of 1QFAC ${\cal A}$, it is easy to check
that the language accepted by ${\cal A}$ with no error is exactly
$L$.
\end{proof}

\begin{Rm}\em  \label{1QFACL} For any regular language $L$ over $\{0,1\}$ accepted by a $k$ state DFA, it was proved that there exists a 1QFACL accepting $L$ with no error and with $3k$ classical states ($3k$ is the number of states of its minimal DFA accepting the control language)  and 3 quantum basis states \cite{MP06}. Here, for 1QFAC, we require only $k$ classical states and 1 quantum basis states. Therefore, in this case, 1QFAC have better state complexity than 1QFACL.
\end{Rm}

\begin{Rm}\em \label{rm:alterantive}
On the other hand, any language accepted by a 1QFAC
is regular. We can prove this result in detail, based on a well-know idea for one-way probabilistic automata by Rabin \cite{Rab63}, that was already applied for MM-1QFA by Kondacs and Watrous \cite{KW97} as well as for MO-1QFA by Brodsky and Pippenger \cite{BP02}. However, the process is much longer and further results are needed, since both classical and quantum states are involved in 1QFAC. Another possible approach is based on topological automata \cite{Boz03,Jea07}.  However, in next section we obtain this result while studying the state complexity of 1QFAC and so we postpone the proof of regularity to the next section.
\end{Rm}

\section{State complexity of 1QFAC}
State complexity of classical finite automata has been a hot research subject with important practical applications \cite{RS97,Yu98}. In this section,  we consider this problem for 1QFAC. First, we  prove a lower bound on the state complexity of 1QFAC which states that 1QFAC are at most exponentially more concise than DFA. Second, we show that our bound is tight by giving some languages that witness the exponential advantage of 1QFAC over DFA. Particularly, these languages can not be accepted by any MO-1QFA, MM-1QFA or multi-letter 1QFA.
\subsection{On the lower bound for 1QFAC}
In this section,  we prove a lower bound for the state complexity of 1QFAC which states that 1QFAC are at most exponentially more concise than DFA. Also, we show that the languages accepted by 1QFAC with bounded error are regular. Some examples given in the next subsection shows that our lower bound is tight.

Given a 1QFAC ${\cal A}=(S,Q,\Sigma,\Gamma, s_{0},|\psi_{0}\rangle, \delta,
\mathbb{U}, {\cal M})$, we reformulate it as a mathematical model $({\cal H}, |\phi_0\rangle, \{M(\sigma): \sigma\in\Sigma\}, \{P_\gamma: \gamma\in\Gamma\})$ that is useful to our discussion,
where
\begin{itemize}
  \item  ${\cal H}={\cal H}(S)\otimes {\cal H}(Q)$;
   \item $|\phi_0\rangle=|s_0\rangle|\psi_0\rangle$;
  \item $M(\sigma)=\sum_{s\in S}|\delta(s,\sigma)\rangle\langle s|\otimes U_{s\sigma}$ for $\sigma\in\Sigma$;
  \item $P_\gamma=\sum_{s\in S}|s\rangle\langle s|\otimes P_{s\gamma}$ for each $\gamma\in\Gamma$.
\end{itemize}
It is easy to verify that \begin{equation}\Prob_{{\cal A},\gamma}(x)=\|P_\gamma M(x)|\phi_0\rangle\|^2 \label{Prob}\end{equation} for each $\gamma\in \Gamma$ and $x\in\Sigma^*$, where $M(x_1\cdots x_n)=M(x_n)\cdots M(x_1)$.
Furthermore, we let
\begin{equation} {\cal V}=\{|\phi_x\rangle: |\phi_x\rangle=M(x)|\phi_0\rangle, x\in\Sigma^*\}. \label{df-v}\end{equation}
Then we have the following result.
\begin{Lm}\label{Lm-v}It holds that
\begin{itemize}
  \item [(i)] each $|\phi\rangle\in{\cal V}$ has the form $|\phi\rangle=|s\rangle|\psi\rangle$ where $s\in S$ and $|\psi\rangle\in {\cal H}(Q)$;
  \item [(ii)] $\||\phi\rangle\|^2=1$ for all $|\phi\rangle\in{\cal V}$;
  \item [(iii)] $\|M(x)|\phi_1\rangle-M(x)|\phi_2\rangle\|\leq\sqrt{2}\||\phi_1\rangle-|\phi_2\rangle\|$ for all $x\in\Sigma^*$.
\end{itemize}
\end{Lm}
\begin{proof} Items (i) and (ii) are easy to be verified and here we omit the proof of them. In the following, we prove item (iii) in detail. Let $|\phi_i\rangle=|s_i\rangle|\psi_i\rangle$ and $|\phi'_i\rangle=M(x)|\phi_i\rangle=|s'_i\rangle|\psi'_i\rangle$ for $i=1,2$ and $x\in \Sigma^*$, where $s_i, s'_i\in S$ and $|\psi_i\rangle, |\psi'_i\rangle\in {\cal H}(Q)$. The  discussion is divided into two cases.

Case (a): $|s_1\rangle=|s_2\rangle$. In this case it necessarily holds that $|s'_1\rangle=|s'_2\rangle$ and furthermore we have
\begin{equation}
\||\phi'_1\rangle-\phi'_2\rangle\|=\||\psi'_1\rangle-|\psi'_2\rangle\|=\||\psi_1\rangle-|\psi_2\rangle\|=\||\phi_1\rangle-|\phi_2\rangle\|,
\end{equation}
where the first and third equations hold because of $\||\alpha\rangle|\beta\rangle\|=\||\alpha\rangle\|.\||\beta\rangle\|$ and the second holds since $|\psi'_1\rangle$ and $|\psi'_2\rangle$ are obtained by performing the same unitary operation on $|\psi_1\rangle$ and $|\psi_2\rangle$, respectively.

Case (b): $|s_1\rangle\neq|s_2\rangle$. First it holds that $\||\phi_1\rangle-|\phi_2\rangle\|=\sqrt{2}$. Indeed, let $|\psi_1\rangle=\sum_i\alpha_i|i\rangle$ and $|\psi_2\rangle=\sum_i\beta_i|i\rangle$. Then we have
\begin{eqnarray}\||\phi_1\rangle-|\phi_2\rangle\|&=&\||s_1\rangle|\psi_1\rangle-|s_2\rangle|\psi_2\rangle\|\\
&=&\left\|\sum_i\alpha_i|s_1\rangle|i\rangle+ \sum_i(-\beta_i)|s_2\rangle|i\rangle\right\|\\
&=& \left(\sum_i |\alpha_i|^2+\sum_i |\beta_i|^2\right)^{\frac{1}{2}}\\
&=&\sqrt{\||\psi_1\rangle\|^2+\||\psi_1\rangle\|^2}\\
&=&\sqrt{2}
\end{eqnarray}
Therefore, \begin{eqnarray}
\||\phi'_1\rangle-\phi'_2\rangle||&=&\||s'_1\rangle|\psi'_1\rangle-|s'_2\rangle|\psi'_2\rangle\|\\
&=&\left\{
     \begin{array}{ll}
       \|\psi'_1\rangle-|\psi'_2\rangle\|, & \hbox{if $s'_1= s'_2$;} \\
       \sqrt{2}, & \hbox{else.}
     \end{array}
   \right.
\end{eqnarray}
Note that $ \|\psi'_1\rangle-|\psi'_2\rangle\|\leq 2=\sqrt{2}\||\phi_1\rangle-|\phi_2\rangle\|$.

 In summary, item (iii) holds in any case.
\end{proof}

Next we present another lemma which is critical for obtaining the lower bound on 1QFAC.
\begin{Lm} \label{Lm-s}
Let ${\cal V}_\theta\subseteq \mathbb{C}^n$ such that $\||\phi_1\rangle-|\phi_2\rangle\|\geq \theta$ for any two elements $|\phi_1\rangle,|\phi_2\rangle\in {\cal V}_\theta$. Then ${\cal V}_\theta$ is a finite set containing  $k(\theta)$ elements where $k(\theta)\leq(1+\frac{2}{\theta})^{2n}$.
\end{Lm}
\begin{proof} Arbitrarily choose an element $|\phi\rangle\in {\cal V}_\theta$. Let $U(|\phi\rangle, \frac{\theta}{2})=\{|\chi\rangle: \||\chi\rangle-|\phi\rangle\|\leq\frac{\theta}{2}\}$, i.e., a sphere centered at $|\phi\rangle$ with the radius $\frac{\theta}{2}$. Then  all these spheres do not intersect pairwise except for  their surface, and all of them are contained in a large sphere centered at $(0,0,\cdots,0)$ with the radius $1+\frac{\theta}{2}$. The volume of a sphere of a radius $r$ in $\mathbb{C}^n$ is $cr^{2n}$ where $c$ is a constant. Note that $\mathbb{C}^n$ is an $n$-dimensional complex space and each element from it can be represented by an element of $\mathbb{R}^{2n}$. Therefore, it holds that
\begin{equation}
k(\theta)\leq\frac{c(1+\frac{\theta}{2})^{2n}}{c(\frac{\theta}{2})^{2n}}=(1+\frac{2}{\theta})^{2n}.
\end{equation}
\end{proof}
\noindent
Below we recall a result that will be used later on (c.f. Lemma 8 in \cite{Ya03} for a complete proof).
\begin{Lm}\label{Lm-p}
For any two elements $|\phi\rangle, |\varphi\rangle\in \mathbb{C}^n$ with $\||\phi\rangle\|\leq c$ and $\||\varphi\rangle\|\leq c$, it holds that $\left | \|P|\phi\rangle\|^2- \|P|\varphi\rangle\|^2   \right|\leq c\||\phi\rangle-|\varphi\rangle\|$ where $P$ is a projective operator on $\mathbb{C}^n$.
\end{Lm}

Given a language $L\subseteq\Sigma^*$, define an equivalence relation ``$\equiv_L$'' as: for any  $x,y\in\Sigma^*$, $x\equiv_L y$ if
for any $z\in\Sigma^*$, $xz\in L$ iff $yz\in
L$. If $x,y$ do not satisfy the equivalence relation, we denote it by $x\not\equiv_L y$. Then  the set $\Sigma^*$ is partitioned into some equivalence classes by the equivalence relation ``$\equiv_L$''. In the following we recall a well-known result that will be used in the sequel.
\begin{Lm}[Myhill-Nerode theorem \cite{HU79}]\label{MN-Th}
A language $L\subseteq\Sigma^*$ is regular if and only if the number of equivalence classes induced by the equivalence relation ``$\equiv_L$'' is finite. Furthermore, the number of  equivalence classes equals to the state number of the minimal DFA accepting $L$.
\end{Lm}

Now we are ready to present our main result.
\begin{thm} \label{bound}
 If $L$ is accepted by a 1QFAC ${\cal M}$ with bounded error, then $L$ is regular and it holds that $kn=\Omega( log ~m)$ where $k$ and $n$ denote  numbers of classical states and quantum basis states of ${\cal M}$, respectively, and $m$ is the state number of the minimal DFA accepting $L$.
\end{thm}
\begin{proof}
 Let ${\cal V}'\subseteq {\cal V}$ (where ${\cal V}$ is given in Eq. (\ref{df-v})) satisfying  for  any two elements $|\phi_{x}\rangle, |\phi_{y}\rangle\in {\cal V}'$  it holds that  $|\phi_{x}\rangle\not=|\phi_{y}\rangle\Leftrightarrow x\not\equiv_L y$.
Then for two different elements $|\phi_{x}\rangle, |\phi_{y}\rangle\in {\cal V}'$ there exists $z\in\Sigma^*$ satisfying $xz\in L$ whereas $yz\not\in L$ (or $xz\not\in L$ whereas $yz\in L$). That is
\begin{eqnarray} \Prob_{{\cal A},a}(xz)&=&||P_a M(z)|\phi_x\rangle||^2\geq\lambda+\epsilon,\\
\Prob_{{\cal A},a}(yz)&=&||P_a M(z)|\phi_y\rangle||^2\leq\lambda-\epsilon\end{eqnarray}
for some $\lambda\in(0,1]$ and $\epsilon>0$.
Therefore we have
\begin{eqnarray} \sqrt{2}\||\phi_x\rangle-|\phi_y\rangle\|&\geq &\|M(z)|\phi_x\rangle-M(z)|\phi_y\rangle\|\\
&\geq &|\Prob_{{\cal A},a}(xz)-\Prob_{{\cal A},a}(yz)|\\
&\geq &2\epsilon\end{eqnarray}
where the first inequality follows from Lemma \ref{Lm-v} and the second follows from Lemma \ref{Lm-p}.
  In summary, we obtain that two different elements $|\phi_{x}\rangle$ and $|\phi_{y}\rangle$ from ${\cal V}'$ satisfy $\|\phi_x\rangle-\phi_y\rangle\|\geq \sqrt{2}\epsilon.$  Therefore, according to Lemma \ref{Lm-s}, we have that the number $|{\cal V}'|$ of elements in ${\cal V}'$ satisfies $|{\cal V}'|\leq(1+\frac{\sqrt{2}}{\epsilon})^{2kn} $, which means that the number of equivalence classes induced by  the equivalence relation ``$\equiv_L$'' is upper bounded by $(1+\frac{\sqrt{2}}{\epsilon})^{2kn}$. Therefore, by Lemma \ref{MN-Th} we have completed the proof. \end{proof}

When the number of classical states equals one in a 1QFAC  ${\cal M}$,   ${\cal M}$ exactly reduces to an MO-1QFA. Therefore,
as a corollary, we can obtain a precise relationship between the numbers of states for MO-1QFA and DFA that was also derived by Ablayev and Gainutdinova \cite{AG00} (though there are two cases in \cite{AG00} by dividing $\epsilon$ into two intervals, from our proof we find it is not necessary).

\begin{Co}
If $L$ is accepted by an MO-1QFA ${\cal M}$ with bounded error, then $L$ is regular and it holds that $n=\Omega( \log m)$ where  $n$ denotes the number of quantum basis states of ${\cal M}$, and $m$ is the state number of the minimal DFA accepting $L$.

\end{Co}

\subsection{The lower bound is tight}

Although 1QFAC accept only regular languages as  DFA, 1QFAC can accept some languages with essentially less number of states than DFA and these languages  cannot  be accepted by any MO-1QFA or MM-1QFA or multi-letter 1QFA. In this section, our purpose is to prove these claims, and we also obtain that the lower bound in Theorem \ref{bound} is tight.

First, we establish a technical result concerning the acceptability by 1QFAC of languages resulting from set operations on languages accepted by MO-1QFA and by DFA.

\begin{Lm} \label{operation}

Let $\Sigma$ be a finite alphabet. Suppose that the language $L_{1}$ over  $\Sigma$ is accepted by a minimal DFA with $n_{1}$ states and  the language $L_{2}$ over  $\Sigma$ is accepted by an MO-1QFA with $n_{2}$ quantum basis states with bounded error $\epsilon$. Then  the intersection $L_{1}\cap L_{2}$, union $L_{1}\cup L_{2}$, differences $L_{1}\setminus L_{2}$ and $L_{2}\setminus L_{1}$ can be accepted by some 1QFAC  with  $n_{1}$ classical states and $n_{2}$ quantum basis states with bounded error $\epsilon$.
\end{Lm}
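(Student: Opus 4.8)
The plan is to construct a single 1QFAC whose classical component simulates the minimal DFA for $L_1$ and whose quantum component runs the MO-1QFA for $L_2$ in lockstep, and then to choose the measurement at each classical state so as to realize the desired Boolean combination. Concretely, let $M=(S,\Sigma,\delta,s_0,F)$ be the minimal DFA for $L_1$ with $|S|=n_1$, and let ${\cal B}=(Q,\Sigma,|\psi_0\rangle,\{U(\sigma)\}_{\sigma\in\Sigma},Q_{\mathrm{acc}})$ be the MO-1QFA for $L_2$ with $|Q|=n_2$ and bounded error $\epsilon$. I would build ${\cal A}=(S,Q,\Sigma,\{a,r\},s_0,|\psi_0\rangle,\delta,\mathbb{U},{\cal M})$ reusing $S$, $s_0$, $\delta$ from $M$ verbatim, and setting $U_{s\sigma}=U(\sigma)$ for every classical state $s$ so that the quantum evolution is independent of $s$ and exactly mimics ${\cal B}$. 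The key observation is that after reading any input $x$, the classical state of ${\cal A}$ is $\mu(x)=\delta^*(s_0,x)$, which records whether $x\in L_1$ (i.e.\ whether $\mu(x)\in F$), while the quantum state $v(x)|\psi_0\rangle=U(\sigma_m)\cdots U(\sigma_1)|\psi_0\rangle=|\psi_x\rangle$ is precisely the MO-1QFA final state, so that $\|P_a|\psi_x\rangle\|^2=p_a(x)$ is ${\cal B}$'s accepting probability.

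With this synchronization in place, the Boolean operation is realized entirely through the choice of the measurement family ${\cal M}=\{\{P_{s,a},P_{s,r}\}\}_{s\in S}$, which is allowed to depend on the classical state $s$. For the intersection $L_1\cap L_2$, I would set $P_{s,a}=P_a=\sum_{q\in Q_{\mathrm{acc}}}|q\rangle\langle q|$ when $s\in F$ and $P_{s,a}=O$ (so $P_{s,r}=I$, rejecting with certainty) when $s\notin F$; thus ${\cal A}$ accepts $x$ with probability $p_a(x)$ if $x\in L_1$ and with probability $0$ otherwise. For the union $L_1\cup L_2$, symmetrically take $P_{s,a}=I$ when $s\in F$ (accept with certainty) and $P_{s,a}=P_a$ when $s\notin F$. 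The two differences are handled the same way: for $L_1\setminus L_2$ use the complementary projector $P_{s,a}=P_r=I-P_a$ on the accepting classical states and $P_{s,a}=O$ elsewhere; for $L_2\setminus L_1$ use $P_{s,a}=O$ when $s\in F$ and $P_{s,a}=P_a$ when $s\notin F$. In each case $P_{s,a}+P_{s,r}=I$ and $P_{s,a}P_{s,r}=O$, so ${\cal M}$ is a legitimate projective measurement at every $s$.

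The remaining task is the error analysis, which I expect to be routine rather than an obstacle. On inputs where the classical component forces a deterministic verdict ($P_{s,a}\in\{O,I\}$), the accepting probability is exactly $0$ or $1$, well within any bounded-error gap. On inputs where the verdict is delegated to the quantum measurement ($P_{s,a}=P_a$ or $P_r$), the accepting probability equals $p_a(x)$ or $1-p_a(x)$, which by hypothesis separates $x\in L_2$ from $x\notin L_2$ by the same margin $\epsilon$ around the original threshold $\lambda$; one checks case by case (using membership of $x$ in $L_1$) that the overall accepting probability of ${\cal A}$ lands above $\lambda+\epsilon$ for $x$ in the target language and below $\lambda-\epsilon$ for $x$ outside it, possibly after shifting the cutpoint to accommodate the deterministic $0/1$ branches. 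The only mild subtlety is ensuring a single common threshold works across both the deterministic and probabilistic branches; since $0<\lambda-\epsilon$ and $\lambda+\epsilon<1$ can be arranged, the certain-accept and certain-reject branches sit on the correct sides of the cutpoint automatically, so the bounded error $\epsilon$ is preserved. This yields a 1QFAC with $n_1$ classical states and $n_2$ quantum basis states accepting each of the four languages with bounded error $\epsilon$.
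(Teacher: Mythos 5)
Your construction is essentially the paper's own: the same product of the minimal DFA with the MO-1QFA via $U_{s\sigma}=U(\sigma)$, and the same state-dependent projective measurements realizing each of the four Boolean operations (your error analysis is in fact more careful than the paper's, which omits it). One remark: for $L_2\setminus L_1$ your choice $P_{s,a}=P_a=\sum_{p\in Q_{acc}}|p\rangle\langle p|$ on $s\not\in F$ is the correct one, whereas the paper's printed projector $\sum_{p\in Q\setminus Q_{acc}}|p\rangle\langle p|$ for that branch appears to be a slip, since it would accept $\overline{L_1\cup L_2}$ rather than $L_2\setminus L_1$.
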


\begin{proof} Let $A_1=(S,\Sigma,\delta,s_0,F)$ be a minimal DFA accepting $L_1$, and let $A_2=(Q, \Sigma, |\psi_{0}\rangle, \\ \{U(\sigma)\}_{\sigma\in\Sigma},Q_{acc})$ be an MO-1QFA accepting $L_2$, where $s_0\in S$ is the initial state, $\delta$ is the transition function, and $F\subseteq S$ is a finite subset denoting accepting states; the symbols in $A_2$ are the same as those in the definition of MO-1QFA as above.

Then by $A_1$ and $A_2$ we define a 1QFAC ${\cal A}=(S,Q,\Sigma,\Gamma, s_{0},|\psi_{0}\rangle, \delta,
\mathbb{U}, {\cal M})$  accepting $L_{1}\cap L_{2}$, where  $S,Q,\Sigma, s_{0},|\psi_{0}\rangle, \delta$ are the same as those in $A_1$ and $A_2$, $\Gamma=\{a,r\}$,  $\mathbb{U}=\{U_{s\sigma}=U(\sigma): s\in S, \sigma\in \Sigma\}$, and ${\cal M}=\{M_{s}:s\in S\}$ where $M_s=\{P_{s,a},P_{s,r} \}$ and
\[
P_{s,a}=\left\{\begin{array}{ll}
\sum_{p\in Q_{acc}}|p\rangle\langle p|, & s\in F;\\
O,& s\not\in F,
\end{array}
\right.
\] where $O$ denotes the zero operator, and
$P_{s,r}=I-P_{s,a}$ with $I$ being the identity operator.

According to the above definition of 1QFAC, we easily know that, for any string $x\in \Sigma^*$, if $x\in L_1$ then the accepting probability of
1QFAC ${\cal A}$ is equal to the accepting probability of MO-1QFA $A_2$; if $x\not\in L_1$ then the accepting probability of
1QFAC ${\cal A}$ is zero. So, 1QFAC ${\cal A}$ accepts the intersection $L_{1}\cap L_{2}$.

Similarly, we can construct the other three 1QFAC accepting the union $L_{1}\cup L_{2}$, differences $L_{1}\setminus L_{2}$, and $L_{2}\setminus L_{1}$, respectively. Indeed, we only need define different measurements in these 1QFAC. If we construct 1QFAC accepting $L_{1}\cup L_{2}$, then
\[P_{s,a}=\left\{\begin{array}{ll}
I,&  s\in F;\\
\sum_{p\in Q_{acc}}|p\rangle\langle p|, & s\not\in F.
\end{array}
\right.
\]
If we construct 1QFAC accepting $L_{1}\setminus L_{2}$, then
\[
P_{s,a}=\left\{\begin{array}{ll}
\sum_{p\in Q\setminus Q_{acc}}|p\rangle\langle p|, & s\in F;\\
O,& s\not\in F.
\end{array}
\right.
\]
If we construct 1QFAC accepting $L_{2}\setminus L_{1}$, then
\[
P_{s,a}=\left\{\begin{array}{ll}
\sum_{p\in Q\setminus Q_{acc}}|p\rangle\langle p|, & s\not\in F;\\
O,& s\in F.
\end{array}
\right.
\]

\end{proof}

Now we consider a regular language $$L^0(m)=\{w0: w\in \{0,1\}^*, |w0|=km,k=1,2,3,\cdots \}.$$  Clearly, the minimal classical DFA accepting $L^0(m)$ has  $m+1$ states, as depicted in Figure~\ref{fig:dfal(m)}.

\begin{figure}[htbp]%
$$\xymatrix  {
 *+++[o][F-]{q_0}\ar@<-1ex>[r]^{0,1}
&*+++[o][F-]{q_1}\ar@<-1ex>[r]^{0,1}
&*+++[o][F-]{q_2}\ar@<-1ex>[r]^{0,1}
&*\txt{ \ \ \ ... \ \ \ }\ar@<-1ex>[r]^{0,1}
&*++[o][F-]{q_{\textrm{\tiny $m\!\!-\!\!1$}}}\ar@/_2pc/@(ul,ur)[llll]_1\ar@<-1ex>[r]^{0}
&*+++[o][F=]{q_m}\ar@/^3pc/@(dr,dl)[llll]^{0,1}
}$$
\caption{DFA accepting $L^0(m)$}%
\label{fig:dfal(m)}%
\end{figure}
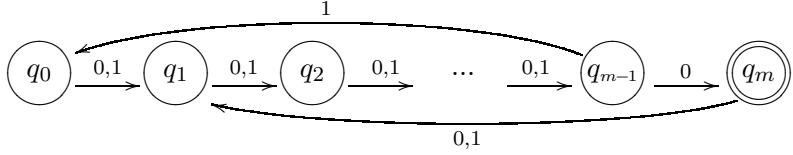

Indeed,  neither MO-1QFA nor MM-1QFA can accept
$L^0(m)$. We can easily verify this result by employing a lemma from \cite{BP02,GK02}. That is,

 \begin{Lm} [\cite{BP02,GK02}] \label{construction}  Let $L$ be a regular language, and let $M$ be its minimal DFA containing the construction in Figure 3, where states $p$ and $q$ are distinguishable (i.e., there exists a string  $z$ such that either $\delta(p,z)$ or  $\delta(q,z)$ is an accepting state). Then, $L$ can not be accepted by MM-1QFA.

 \end{Lm}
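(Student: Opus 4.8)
The plan is to argue by contradiction in the style of Kondacs and Watrous \cite{KW97} and Brodsky and Pippenger \cite{BP02}: turn the combinatorial pattern of Figure 3 into two interleaved \emph{infinite} families of inputs whose membership in $L$ is forced to be opposite, and then show that any MM-1QFA must assign these two families asymptotically equal accepting probabilities, which clashes with bounded-error recognition. Suppose then that some MM-1QFA ${\cal A}$ recognizes $L$ with isolated cutpoint $\lambda$ and margin $\epsilon>0$. First I would read off the combinatorial data from the construction. Since $M$ is minimal it is accessible, so there is a word $u$ with $\delta(s_0,u)=p$; the construction supplies a word $x$ that merges the two states under iteration (the representative case being $\delta(p,x)=q$ and $\delta(q,x)=q$, which is exactly the non-injectivity that group languages lack), a return word $y$ with $\delta(q,y)=p=\delta(p,y)$, and the distinguishing suffix $z$ with precisely one of $\delta(p,z),\delta(q,z)$ in $F$. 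Writing $B=xy$, the strings $R_k=u\,B^{k}z$ all reach $\delta(p,z)$ while $P_k=u\,B^{k}x\,z$ all reach $\delta(q,z)$, so $\{R_k\}$ and $\{P_k\}$ are two infinite families lying in opposite classes of $L$.

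Next I would encode the computation of ${\cal A}$ through its non-halting evolution: reading a symbol $\sigma$ acts on the surviving (un-measured) amplitude by the contraction $A_\sigma=P_nU(\sigma)$ with $\|A_\sigma\|\le 1$, while the accepting probability accumulates the squared norms of the measured-accept components at each step, exactly as in the formula for $p_a(x)$ recalled in Subsection~2.1. Reading the fixed block $B$ thus applies a single contraction $T=A_B$ to the surviving state, so after $u\,B^{k}$ the surviving state is $T^{k}|\phi\rangle$ (where $|\phi\rangle$ is the surviving state after $u$) and the accumulated accept-probability $\alpha_k$ is nondecreasing and bounded, hence convergent. The quantity I must control is the residual accepting probability generated by the trailing $x$ and $z$ in $P_k$ relative to $R_k$, as a function of the surviving state $T^{k}|\phi\rangle$.

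The analytic heart is the behaviour of $T^{k}$, and this is where I expect the main difficulty. I would decompose ${\cal H}(Q)$ into the subspace $E_1$ on which $T$ preserves norm (the peripheral part, on which $T$ restricts to a unitary) and its orthogonal complement on which $T$ is a strict contraction. On the complement the surviving amplitude, and with it every future accepting contribution, tends to $0$; on $E_1$ the block $B$ never triggers the halting measurement, hence neither does its prefix $x$, so no extra accepting weight is produced there in the limit. Consequently the gap $\Prob_{{\cal A},a}(P_k)-\Prob_{{\cal A},a}(R_k)$ is governed solely by the vanishing complement part and tends to $0$ as $k\to\infty$. But since $R_k$ and $P_k$ sit on opposite sides of $L$, bounded-error recognition forces this gap to stay at least $2\epsilon$ for all $k$, a contradiction. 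The obstacle is precisely this spectral analysis of the \emph{non-unitary} contraction $T$: establishing the peripheral decomposition, proving that no accepting probability is generated on $E_1$ while pumping, and that the complement's per-block contributions are summable with vanishing tail. This is the quantum counterpart of Rabin's boundedness argument \cite{Rab63} and constitutes the technical core of \cite{KW97,BP02}; once it is in place the remaining estimates are routine.
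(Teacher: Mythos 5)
First, note that the paper does not prove this lemma at all: it is imported verbatim from \cite{BP02,GK02} (and the closely related work of Ambainis, Kikusts and Valdats), so there is no in-paper argument to compare yours against; your proposal has to stand on its own. Parts of it do: the word families $R_k=uB^kz$ and $P_k=uB^kxz$ with $B=xy$ are correctly built (you only need $\delta(p,xy)=p$, so your incidental claim that $\delta(p,y)=p$ is unwarranted but harmless), they do land in opposite classes of $L$, and the decomposition of the contraction $T=A_B$ into a peripheral subspace $E_1$ on which $T$ is unitary and an orthogonal complement on which $\|T^k v\|\to 0$ is a correct and standard lemma.

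The genuine gap is the inference ``no extra accepting weight is produced while reading the inserted $x$ on $E_1$, consequently $\Prob_{{\cal A},a}(P_k)-\Prob_{{\cal A},a}(R_k)\to 0$.'' This is a non sequitur: even though reading $x$ from a state in $E_1$ triggers no halting measurement, it replaces the surviving state $T^k\phi_1$ by $A_xT^k\phi_1$, and the accepting probability subsequently generated by the suffix $z\$$ is a function of that state; nothing you have said forces $f_{z\$}(A_xT^k\phi_1)$ to approach $f_{z\$}(T^k\phi_1)$. The tell is that your argument never uses the self-loop $\delta(q,x)=q$, which is exactly what separates the forbidden construction from harmless cycles, and any proof that omits it must be wrong: take $L=\{a^{2k}\}$, accepted with certainty by a two-state MO-1QFA (hence by an MM-1QFA), and apply your reasoning with $u=z=\epsilon$, $x=y=a$, $B=aa$; then $T=U(a)^2$ is unitary, $E_1$ is everything, no halting amplitude is ever produced before the end-marker, yet $\Prob(R_k)-\Prob(P_k)=1$ for all $k$. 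The actual proofs in \cite{BP02,GK02} pump the word $x$ itself, using $\delta(p,x)=\delta(q,x)=q$ so that for suitably chosen powers $A_x^j$ the machine provably cannot remember whether the DFA entered the $x$-loop from $p$ or from $q$ (this needs the $E_1\oplus E_2$ decomposition of $A_x$, almost-periodicity of its unitary part, and a norm-conservation argument for the return word $y$); that analysis is precisely what you have labelled ``the technical core'' and deferred to the references, so the proposal does not constitute a proof.
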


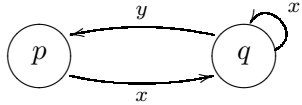
\begin{figure}[htbp]%
$$\xymatrix  {
 *+++[o][F-]{p}\ar@<-1ex>@/_/[rr]_{x}
&&*+++[o][F-]{q}\ar@<-1ex>@/_/[ll]_{y}\ar@(r,u)[]_{x}
}$$
\caption{Construction not accepted by an MM-1QFA}%
\label{fig:construction}%
\end{figure}

 \begin{Pp} Then neither MO-1QFA nor MM-1QFA can accept
 $L^0(m)$.

 \end{Pp}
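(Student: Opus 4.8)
The plan is to reduce both claims to Lemma~\ref{construction}. Since every language recognized with bounded error by an MO-1QFA is also recognized by some MM-1QFA, it suffices to rule out MM-1QFA; the MO-1QFA case then follows. Thus the goal is to locate, inside the minimal DFA of $L^0(m)$ shown in Figure~\ref{fig:dfal(m)}, the forbidden pattern of Figure~\ref{fig:construction} between two distinguishable states.

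I would take $p=q_0$, $q=q_m$, and the words $x=0^m$ and $y=1^m$. (Here $x$ and $y$ are words rather than single letters; this is admissible because the characterization of \cite{BP02,GK02} underlying Lemma~\ref{construction} is a statement about the transition monoid of the minimal DFA.) The three transition identities required by Figure~\ref{fig:construction} are then checked by tracing the automaton through its $m+1$ states: reading $0^m$ starting from $q_m$ visits $q_1,q_2,\dots,q_{m-1}$ and returns to $q_m$, so $\delta^*(q_m,0^m)=q_m$, which supplies the self-loop labelled $x$ at $q$; reading the same word from $q_0$ ends at $q_m$, giving $\delta^*(q_0,0^m)=q_m$, the edge $p\to q$ labelled $x$; and reading $1^m$ from $q_m$ runs through $q_1,\dots,q_{m-1}$ and then takes the return transition $\delta(q_{m-1},1)=q_0$, so $\delta^*(q_m,1^m)=q_0$, the edge $q\to p$ labelled $y$. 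Finally $q_0$ and $q_m$ are distinguishable, witnessed by $z=\epsilon$, since $q_m$ is accepting and $q_0$ is not.

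Once the pattern is exhibited, Lemma~\ref{construction} gives that $L^0(m)$ is not accepted by any MM-1QFA, and hence not by any MO-1QFA. I expect the only point needing care to be the joint verification, for the single word $x=0^m$, of both $\delta^*(q,x)=q$ and $\delta^*(p,x)=q$: this is exactly the non-injectivity (two distinct states collapsed by $x$) that a reversible quantum evolution cannot realize, and confirming it amounts to reading off the transitions of Figure~\ref{fig:dfal(m)} accurately---especially that each $q_i$ advances on both letters and that the two backward edges are $\delta(q_{m-1},1)=q_0$ and $\delta(q_m,0)=\delta(q_m,1)=q_1$. Beyond that the argument is a routine trace, and no use of the primality of $m$ is needed.
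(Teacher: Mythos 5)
Your proof is correct and takes essentially the same route as the paper: the paper also reduces the MO-1QFA case to the MM-1QFA case and then exhibits the forbidden pattern of Lemma~\ref{construction} in the minimal DFA with $p=q_0$, $q=q_m$, $x=0^m$ and $z=\epsilon$. The only (inessential) difference is the return word, where the paper takes $y=0^{m-1}1$ while you take $y=1^m$; both satisfy $\delta^*(q_m,y)=q_0$.
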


 \begin{proof}
 Indeed, it suffices to show that no MM-1QFA can accept $L^0(m)$ since the  languages accepted by MO-1QFA are also accepted by MM-1QFA \cite{AF98,BP02,BMP03}.  By  Lemma \ref{construction},  we know that $L^0(m)$ can not be accepted by any MM-1QFA  since its minimal DFA (see Figure 2) contains such a construction: For example, we can take $p=q_0, q=q_m, x=0^m, y=0^{m-1}1, z=\epsilon$.
\end{proof}

Let us recall an important result from \cite{AF98}.

\begin{Pp}[\cite{AF98}] \label{Lp} Let the language $L_{p}=\{a^i: i \hskip 2mm  \textrm{is divisible by} \hskip 2mm p\}$ where $p$ is a prime number. Then for any $\varepsilon>0$, there exists an MM-1QFA with $O(\log(p))$ states such that for any $x\in L_{p}$, $x$ is accepted with no error, and the  probability for accepting $x\not\in  L_{p}$ is smaller than  $\varepsilon$.
 \end{Pp}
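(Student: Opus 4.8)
The plan is to realize the Ambainis--Freivalds construction as a direct sum of many two-dimensional rotations and then to pin down the rotation angles by a probabilistic argument. First I would fix an integer $d=O(\log p)$ (with the hidden constant depending on $\varepsilon$) together with a set of integers $k_1,\dots,k_d\in\{1,\dots,p-1\}$ to be specified later. The automaton has $2d$ basis states, grouped into $d$ blocks $\{|j,0\rangle,|j,1\rangle\}_{j=1}^{d}$, and starts in the uniform superposition $|\psi_0\rangle=\frac{1}{\sqrt d}\sum_{j=1}^{d}|j,0\rangle$. On reading the letter $a$ the evolution is block-diagonal, rotating block $j$ by the angle $\theta_j=2\pi k_j/p$; thus after reading $a^i$ block $j$ is in the state $\cos(i\theta_j)|j,0\rangle+\sin(i\theta_j)|j,1\rangle$. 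On reading the right end-marker $\$$ I would apply a unitary sending $|\psi_0\rangle$ to a designated accepting basis state and then measure, so that the accepting probability on $a^i$ equals $|\langle\psi_0|\psi_{a^i}\rangle|^2=\big|\frac1d\sum_{j=1}^{d}\cos(i\theta_j)\big|^2$. Since only a single final measurement is used, this is in fact an MO-1QFA and, a fortiori, an MM-1QFA.

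Next I would verify the two-sided behaviour. If $p\mid i$, then each $i\theta_j$ is an integer multiple of $2\pi$, every block returns to $|j,0\rangle$, the final state is exactly $|\psi_0\rangle$, and $a^i$ is accepted with probability $1$; this yields the required no-error acceptance on $L_p$. The whole difficulty is concentrated in the complementary case $p\nmid i$, where I must force $\big|\frac1d\sum_{j=1}^{d}\cos(i\theta_j)\big|^2<\varepsilon$ simultaneously for all such $i$. I would first observe that this quantity depends only on $i\bmod p$, so it suffices to control the $p-1$ nonzero residue classes, a finite problem.

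The key step is the choice of the $k_j$. I would draw them independently and uniformly from $\{1,\dots,p-1\}$ and use the elementary identity $\sum_{k=1}^{p-1}\cos(2\pi ka/p)=-1$, valid for every $a$ with $p\nmid a$ (it is the real part of the vanishing sum of the nontrivial $p$-th roots of unity). Hence, for a fixed residue $a=i\bmod p\neq0$, the random variables $\cos(2\pi k_j a/p)$ are i.i.d., bounded in $[-1,1]$, with mean $-1/(p-1)$, so by Hoeffding's inequality their empirical average deviates from $0$ by more than $\sqrt\varepsilon$ with probability at most $2e^{-\Omega(d\varepsilon)}$. A union bound over the $p-1$ residues shows that for $d=\Theta(\varepsilon^{-1}\log p)=O(\log p)$ the probability that some residue fails is strictly below $1$; therefore a good set $\{k_1,\dots,k_d\}$ exists, and for it every $i$ with $p\nmid i$ is accepted with probability below $\varepsilon$.

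I expect the main obstacle to be precisely this last, uniform-over-all-$i$ estimate: a single rotation block leaves the cosine sum close to $1$ for infinitely many $i$, so one cannot argue block-by-block and must instead exploit cancellation across the randomly chosen angles together with the finiteness of the residue classes modulo $p$. The smallest prime $p=2$ is handled directly by taking the single angle $\pi$. Once the counting bound fixes $d=O(\log p)$, assembling the automaton and reading off the state count $2d=O(\log p)$ is routine.
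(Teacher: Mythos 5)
The paper offers no proof of this proposition --- it is imported verbatim from Ambainis and Freivalds \cite{AF98} --- so the only question is whether your reconstruction of their argument is sound. Your construction (a direct sum of $d=O(\log p)$ two-dimensional rotations by $2\pi k_j/p$, acceptance probability $\bigl|\frac1d\sum_j\cos(2\pi k_j i/p)\bigr|^2$ obtained by an end-marker unitary that returns $|\psi_0\rangle$ to an accepting basis state, and a probabilistic choice of the $k_j$ with a union bound over the nonzero residues) is indeed the Ambainis--Freivalds construction, and it is correct for all sufficiently large primes.

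There is, however, a concrete flaw in the small-prime regime. Hoeffding's inequality concentrates the empirical average of $\cos(2\pi k_j a/p)$ around its \emph{mean} $-1/(p-1)$, not around $0$; your step ``deviates from $0$ by more than $\sqrt{\varepsilon}$'' silently requires $1/(p-1)\le c\sqrt{\varepsilon}$. For primes below roughly $1+1/\sqrt{\varepsilon}$ the estimate gives nothing, and this is not an artifact of the inequality: summing $\frac1d\sum_j\cos(2\pi k_j a/p)$ over all $a\not\equiv 0$ gives $-1$, so some non-member is always accepted with probability at least $1/(p-1)^2$; for $p=3$ every admissible angle has $\cos(2\pi k a/3)=-1/2$, so every non-member is accepted with probability exactly $1/4$ no matter how many blocks you use. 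Your explicit patch for $p=2$ is also wrong: with the single angle $\pi$ the state after an odd number of $a$'s is $-|\psi_0\rangle$, whose squared overlap with $|\psi_0\rangle$ is $1$, so \emph{every} string is accepted with probability $1$. The repair is easy but must be stated: for each fixed $\varepsilon$ only finitely many primes are affected, and for those $L_p$ is a group language, accepted \emph{exactly} (zero error) by the $p$-state cyclic permutation automaton viewed as an MO-1QFA, which still costs $O(\log p)$ states with an $\varepsilon$-dependent constant. With that case added, your proof is complete and matches the cited source.
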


Indeed, from the proof of Proposition \ref{Lp} by \cite{AF98}, also as Ambainis and Freivalds pointed out in \cite{AF98} (before Section 2.2 in \cite{AF98}), Proposition \ref{Lp} holds for MO-1QFA as well.

Clearly, by the same technique as the proof of Proposition \ref{Lp} \cite{AF98}, then one can obtain that, by replacing $L_{p}$ with $L(m)=\{w: w\in \{0,1\}^*, |w|=km,k=1,2,3,\cdots \}$ with $m$ being a prime number, Proposition \ref{Lp} still holds (by viewing all input symbols in $\{0,1\}$ as $a$).
By combining Proposition \ref{Lp} with Lemma \ref{operation}, we have the following corollary.

\begin{Co} Suppose that $m$ is a prime number. Then for any $\varepsilon>0$, there exists a 1QFAC with 2 classical states and $O(\log(m))$ quantum basis states such that for any $x\in L^0(m)$, $x$ is accepted with no error, and the probability for accepting $x\not\in  L^0(m)$ is smaller than  $\varepsilon$.
 \end{Co}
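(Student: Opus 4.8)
The plan is to combine the two tools the paper has already set up: the improved version of Proposition~\ref{Lp} (valid for MO-1QFA and for the language $L(m)$ with $m$ prime), and the closure-type construction of Lemma~\ref{operation}. First I would fix an arbitrary $\varepsilon>0$ and, since $m$ is prime, invoke Proposition~\ref{Lp} in its $L(m)$-form: this yields an MO-1QFA $A_2$ with $O(\log(m))$ quantum basis states that accepts every $x\in L(m)$ with certainty and accepts every $x\notin L(m)$ with probability smaller than $\varepsilon$. The point of using this strengthened automaton, rather than the single-rotation MO-1QFA from the earlier proposition, is precisely to decouple the rejection margin from $m$: here the error bound is the prescribed $\varepsilon$, not $\cos^2(\pi/m)$, which is what fixes the defect noted in the remark after the earlier proposition.

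Next I would recall that $L^0=\{w0:w\in\Sigma^*\}$ is accepted by a two-state minimal DFA $A_1$ (Figure~4), and that $L^0(m)=L^0\cap L(m)$. Feeding $A_1$ and $A_2$ into Lemma~\ref{operation}, with $L_1=L^0$ (so $n_1=2$) and $L_2=L(m)$ (so $n_2=O(\log(m))$ and bounded error $\varepsilon$), produces a 1QFAC accepting the intersection $L^0\cap L(m)=L^0(m)$ with $2$ classical states, $O(\log(m))$ quantum basis states, and bounded error $\varepsilon$. The only thing left is to check that the one-sided (no-error-on-acceptance) guarantee survives the intersection construction, which it does: in the construction of Lemma~\ref{operation} for $L_1\cap L_2$, on input $x\in L^0(m)$ one has $x\in L^0$ so $\mu(x)\in F$ and $P_{\mu(x),a}$ equals the accepting projector of $A_2$, whence the accepting probability of the 1QFAC equals that of $A_2$ on $x$, which is $1$ because $x\in L(m)$; and on input $x\notin L^0(m)$ the accepting probability is either $0$ (if $x\notin L^0$) or equals the $A_2$-acceptance probability of $x\notin L(m)$, hence below $\varepsilon$. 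Both bounds are exactly the claimed ones.

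There is really no deep obstacle here; the corollary is essentially a bookkeeping composition of the two prior results, and the main thing to be careful about is the error analysis in the last step. The subtle point is that Lemma~\ref{operation} as stated only promises two-sided \emph{bounded} error $\varepsilon$, whereas the corollary asserts the stronger one-sided statement (exact acceptance on $L^0(m)$, probability $<\varepsilon$ off it). So the part deserving genuine attention is verifying that the particular intersection construction preserves one-sidedness, i.e.\ that no error is introduced on the ``yes'' side by the classical-state component; this follows immediately from the fact that the classical transitions of $A_1$ are deterministic and exact and the measurement $P_{\mu(x),a}$ simply copies $A_2$'s accepting projector whenever $\mu(x)\in F$. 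Once this is observed, the counts $2$ classical states and $O(\log(m))$ quantum basis states are read off directly from $A_1$ and the strengthened $A_2$, and the proof is complete.
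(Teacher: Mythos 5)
Your proposal is correct and follows essentially the same route as the paper: the paper also obtains the corollary by combining the MO-1QFA version of Proposition~\ref{Lp} for $L(m)$ (with $m$ prime, viewing all symbols as $a$) with the intersection construction of Lemma~\ref{operation} applied to the two-state DFA for $L^0$. Your explicit check that the intersection construction preserves the one-sided error guarantee is a detail the paper leaves implicit, and it is verified correctly.
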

\begin{proof} Note  that  we have $$L^0(m)=L^0\cap L(m)$$ where  $L^0=\{w0: w\in \{0,1\}^*\}$ is accepted by a DFA (depicted in Figure \ref {fig:dfaw0}) with only two states and $L(m)$ can be accepted by an MO-1QFA with $O(\log(m))$  quantum basis states as shown in Proposition \ref{Lp}. Therefore, the result follows from Lemma \ref{operation}.
\end{proof}
\begin{figure}[htbp]%
$$\xymatrix  {
 *+++[o][F-]{q_0}\ar@<-1ex>@/_/[rr]_{0}\ar@(l,u)[]^{1}
&&*+++[o][F=]{q_1}\ar@<-1ex>@/_/[ll]_{1}\ar@(r,u)[]_{0}
}$$
\caption{DFA accepting $\{0,1\}^*0$}%
\label{fig:dfaw0}%
\end{figure}
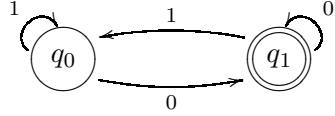

In summary, we have the following result.

 \begin{Pp}\label{prop:accbe} For any prime number $m\geq 2$, there exists a regular language $L^0(m)$ satisfying: (1) neither MO-1QFA nor MM-1QFA can accept $L^0(m)$; (2) the number of states in the minimal DFA accepting $L^0(m)$ is $m+1$; (3) for any  $\varepsilon>0$, there exists a 1QFAC with 2 classical states and $O(\log(m))$ quantum basis states such that for any $x\in L^0(m)$, $x$ is accepted with no error, and the probability for accepting $x\not\in  L^0(m)$ is smaller than  $\varepsilon$.

 \end{Pp}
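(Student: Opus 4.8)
The plan is to assemble the three assertions from results already proved for this particular language, since $L^0(m)$ has been constructed precisely so that each part goes through; I would therefore present the proof as a consolidation rather than as a fresh argument. The language in question is $L^0(m)=\{w0:w\in\Sigma^*,\ |w0|=km,\ k=1,2,3,\cdots\}$ over $\Sigma=\{0,1\}$, and the three claims correspond exactly to the proposition, the remark on the minimal DFA, and the corollary established immediately above.

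First, for (1), I would invoke the earlier proposition ruling out MO-1QFA and MM-1QFA. The decisive point is Lemma~\ref{construction}: the minimal DFA of $L^0(m)$ in Figure~\ref{fig:dfal(m)} contains the forbidden configuration of two distinguishable states joined by the pattern $p\xrightarrow{x}q$, $q\xrightarrow{y}p$, $q\xrightarrow{x}q$. Taking $p=q_0$, $q=q_m$, $x=0^m$, $y=0^{m-1}1$ and the distinguishing string $z=\epsilon$ exhibits this configuration, so no MM-1QFA recognizes $L^0(m)$; since every MO-1QFA language is also an MM-1QFA language, MO-1QFA is excluded as well. For (2), the minimal DFA is exactly the $(m+1)$-state automaton of Figure~\ref{fig:dfal(m)}, which keeps a counter of the input length modulo $m$ and checks that the final symbol is $0$; a Myhill--Nerode check confirms that these $m+1$ classes are pairwise inequivalent, so the stated count is minimal.

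For (3), I would simply cite the corollary obtained just above. Its construction factors $L^0(m)=L^0\cap L(m)$, where $L^0=\{w0:w\in\Sigma^*\}$ is accepted by a $2$-state DFA and $L(m)=\{w:|w|\equiv 0\pmod m\}$ is accepted, when $m$ is prime, by an MO-1QFA with $O(\log m)$ quantum basis states and one-sided error below any prescribed $\varepsilon$ (Proposition~\ref{Lp}, reading every input symbol as $a$). Feeding these two machines into Lemma~\ref{operation} yields a 1QFAC for the intersection with $2$ classical states, $O(\log m)$ quantum basis states, no error on members of $L^0(m)$, and acceptance probability below $\varepsilon$ on non-members.

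The main subtlety---and the only place the prime hypothesis is genuinely used---lies in (3): the crude bound $\cos^2(\pi/m)$ coming from the naive two-state rotation construction does not tend to $0$, so to drive the error below an arbitrary $\varepsilon$ while keeping the quantum dimension at $O(\log m)$ one must appeal to the Ambainis--Freivalds amplification of Proposition~\ref{Lp}, which requires $m$ to be prime. Thus I expect no new obstacle in the present statement; the substantive work resides entirely in Lemmas~\ref{construction} and~\ref{operation} and in Proposition~\ref{Lp}, and this proposition is simply their common consequence, collected for emphasis.
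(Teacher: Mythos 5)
Your proposal is correct and matches the paper exactly: the paper states this proposition with the preface ``In summary, we have the following result,'' i.e.\ it is precisely the consolidation you describe, with (1) coming from the F-free/forbidden-construction argument via Lemma~\ref{construction}, (2) from the $(m+1)$-state minimal DFA of Figure~\ref{fig:dfal(m)}, and (3) from the corollary combining Proposition~\ref{Lp} with Lemma~\ref{operation}. You also correctly isolate the role of the primality hypothesis in part (3), which is the only nontrivial point worth flagging.
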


\begin{Rm}\em  From the above proposition (see (2) and (3)) it follows that the lower bound given in Theorem \ref{bound} is tight, that is, attainable.
\end{Rm}

One should ask at this point whether similar results can be established for multi-letter 1QFA as proposed by Belovs et al. \cite{BRS07}.

Recall that $1$-letter 1QFA is exactly an MO-1QFA. Any  given
$k$-letter QFA can be simulated by some $k+1$-letter QFA. However, Qiu and Yu \cite{QY09}
proved that the contrary does not hold.  Belovs et al.
\cite{BRS07} have already showed that $(a+b)^{*}b$ can be accepted
by a 2-letter QFA but, as proved in \cite{KW97}, it cannot be
accepted by any MM-1QFA. On the other hand,  $a^*b^*$ can be accepted by MM-1QFA \cite{AF98} but it can not be accepted by any multi-letter 1QFA \cite{QY09}, and furthermore, there exists a regular language that can not be accepted by any MM-1QFA or  multi-letter 1QFA \cite{QY09}.

Let $\Sigma$ be an alphabet. For string $z=z_1\cdots z_n\in\Sigma^*$, consider the regular language  $$L_z=\Sigma^*z_1\Sigma^* z_2\Sigma^*\cdots \Sigma^* z_n \Sigma^*.$$ ($L_z$  belongs to piecewise testable set that was introduced by Simon \cite{Sim75} and studied in \cite{Per94}. Brodsky and Pippenger \cite{BP02} proved that $L_z$ can be accepted by an MM-1QFA with $2n+3$ states.) Let another regular language $L(m)=\{w: w\in \Sigma^*, |w|=km,k=1,2,\cdots\}$. Then the minimal DFA accepting $L_z$ needs $n+1$ states, and the minimal DFA accepting the intersection $L_z(m)$ of $L_z$ and $L(m)$  needs  $m
(n+1)$ states. We will prove that no multi-letter 1QFA can accept $L_z(m)$. Indeed,
the minimal DFA accepting $L_z(m)$ can be described by $A=(Q, \Sigma, \delta, q_0, F)$ where
$Q=\{S_{ij}: i=0, 1, \dots, n; j=1, 2, \dots, m\}$, $\Sigma=\{z_1, z_2, \dots, z_n\}$, $q_0=S_{01}$,
$F=\{S_{n1}\}$, and the transition function $\delta$ is defined as:
\begin{equation}\label{minimalDFA}
    \delta(S_{ij},\sigma)=\left\{
                            \begin{array}{ll}
                              S_{n,(j{\hskip -2mm} \mod m)+1}, & \textrm{if } i=n, \\
                              S_{i+1,(j{\hskip -2mm} \mod m)+1}, & \textrm{if } i\neq n \textrm{ and } \sigma=z_{i+1}, \\
                              S_{i,(j{\hskip -2mm} \mod m)+1}, &\textrm{if } i\neq n \textrm{ and } \sigma\neq z_{i+1}.
                            \end{array}
                          \right.
\end{equation}

The number of states of the minimal DFA accepting $L_z(m)$ is $m
(n+1)$.

For the sake of simplicity, we consider a special case: $m=2$, $n=1$, and $\Sigma=\{0,1\}$. Indeed, this case can also show the above problem as desired. So, we consider the following language:
\[
L_0(2)=\{w: w\in \{0,1\}^*0\{0,1\}^*, |w|=2k,k=1,2,\cdots\}.
\]

The minimal DFA accepting $L_0(2)$ above needs 4 states and its transition figure is depicted by Figure~\ref{fig:dfa0even} as follows.

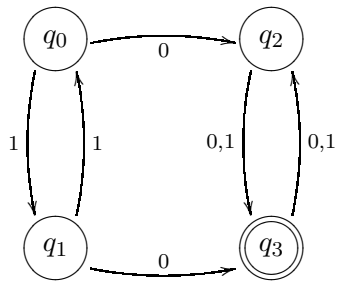
\begin{figure}[htbp]%
$$\xymatrix  {
 *+++[o][F-]{q_0}\ar@<-1ex>@/^/[rr]_{0}\ar@<-1ex>@/_/[dd]_{1}
&&*+++[o][F-]{q_2}\ar@<-1ex>@/_/[dd]_{0,1}\\\\
 *+++[o][F-]{q_1}\ar@<-1ex>@/_/[rr]^{0}\ar@<-1ex>@/_/[uu]_{1}
&&*+++[o][F=]{q_3}\ar@<-1ex>@/_/[uu]_{0,1}
}$$
\caption{DFA accepting $w\in \{0,1\}^*0\{0,1\}^*$ with $|w|$ even.}%
\label{fig:dfa0even}%
\end{figure}

We recall the definition of  F-construction and a proposition from \cite{BRS07}.

\begin{Df} [\cite{BRS07}]\em
A DFA  with state transition function $\delta$ is said to {\em contain
an F-construction} (see Figure~\ref{fig:fconstruction}) if there are non-empty words $t,z\in
\Sigma^{+}$ and two distinct states $q_{1},q_{2}\in Q$ such that
$\delta^{*}(q_{1},z)=\delta^{*}(q_{2},z)=q_{2}$,
$\delta^{*}(q_{1},t)=q_{1}$, $\delta^{*}(q_{2},t)=q_{2}$, where $\Sigma^+=\Sigma^*\backslash \{\epsilon\}$, $\epsilon$ denotes empty string.

\end{Df}

We can depict F-construction by Figure~\ref{fig:fconstruction}.

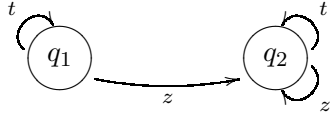
\begin{figure}[htbp]%
$$\xymatrix  {
 *+++[o][F-]{q_1}\ar@<-1ex>@/_/[rr]_{z}\ar@(l,u)[]^{t}
&&*+++[o][F-]{q_2}\ar@(r,u)[]_{t}\ar@(r,d)[]^{z}
}$$
\caption{F-Construction}%
\label{fig:fconstruction}%
\end{figure}

\begin{Lm} [\cite{BRS07}]
A language $L$ can be accepted by a multi-letter 1QFA with bounded
error if and only if the minimal DFA of $L$ does not contain  any
F-construction.
\end{Lm}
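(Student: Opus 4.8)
The plan is to prove both implications by reducing a $k$-letter 1QFA to a measure-once automaton and then invoking the Brodsky--Pippenger characterization \cite{BP02}, according to which an MO-1QFA accepts a language with bounded error precisely when the minimal DFA of that language is a permutation (reversible) automaton, i.e. every input letter acts as a bijection on the state set. The first step is to make precise the observation that a $k$-letter 1QFA reading $x=x_1\cdots x_m$ is nothing but an MO-1QFA that reads the derived string $w_1w_2\cdots w_m$ over the window alphabet $\Delta=(\{\Lambda\}\cup\Sigma)^{k}$, where $w_i$ is the length-$k$ suffix (padded by $\Lambda$ at the left end) ending at position $i$, the unitary applied at step $i$ is $U_{w_i}$, and a single measurement is performed at the end. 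Equivalently, I would work with the \emph{$k$-window expansion} $M_k$ of the minimal DFA $M$ of $L$, whose states are pairs $(q,u)$ with $q$ a state of $M$ and $u$ the last $k-1$ letters read, each $\sigma\in\Sigma$ updating both components. A routine check shows that $L$ is accepted by a $k$-letter 1QFA with bounded error if and only if $M_k$ is a permutation automaton, which by \cite{BP02} is the criterion to be analysed.

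For sufficiency, assume $M$ contains no F-construction. I would argue that for all sufficiently large $k$ the expansion $M_k$ is a permutation automaton, and then conclude by \cite{BP02} that the associated $k$-letter 1QFA accepts $L$. Concretely, suppose some letter $\sigma$ merged two distinct states $(p,u)\neq(p',u')$ of $M_k$ into a common successor. The windows force $u$ and $u'$ to share their last $k-1$ letters, so for $k$ large a pumping argument on the finite state set of $M$ produces a non-empty word $t\in\Sigma^{+}$ fixing two distinct states together with a word $z\in\Sigma^{+}$ collapsing them to one, which is exactly an F-construction, contradicting the hypothesis. Hence no such merge exists and $M_k$ is a permutation automaton.

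For necessity I would prove the contrapositive: if $M$ contains an F-construction witnessed by $t,z\in\Sigma^{+}$ and distinct states $q_1,q_2$ with $\delta^{*}(q_1,t)=q_1$, $\delta^{*}(q_2,t)=q_2$ and $\delta^{*}(q_1,z)=\delta^{*}(q_2,z)=q_2$, then for \emph{every} $k$ the expansion $M_k$ fails to be a permutation automaton. Taking $t^{N}$ with $N$ large relative to $k$, the two configurations $(q_1,W)$ and $(q_2,W)$, where $W$ is the common length-$(k-1)$ suffix of $t^{N}$, are distinct because $q_1\neq q_2$, yet reading $z$ sends both to one and the same configuration; since a product of permutations is again a permutation, some individual letter occurring in $z$ must therefore act non-injectively on $M_k$. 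By \cite{BP02} no MO-1QFA over $\Delta$, hence no $k$-letter 1QFA, accepts $L$ with bounded error, and as $k$ was arbitrary, $L$ is accepted by no multi-letter 1QFA.

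The main obstacle I anticipate is the sufficiency direction, specifically the uniform-in-$k$ pumping step: turning a spurious merge in $M_k$ into a genuine F-construction in $M$ requires choosing $k$ large enough that the length-$(k-1)$ window records a repeated DFA state along the merging path, so that the repeated block can serve as the self-looping word $t$ while the collapsing segment serves as $z$. Making this extraction precise, and checking that the overlapping-window constraint (consecutive windows share $k-1$ letters, so $M_k$ is \emph{not} a free automaton over $\Delta$) does not interfere with the permutation-automaton criterion of \cite{BP02}, is where the real care is needed; the role of the self-loops under $t$ is exactly to guarantee that the merge occurs inside a recurrent part of the automaton, so that it truly obstructs reversibility rather than being an innocuous transient collapse.
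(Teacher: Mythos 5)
The paper does not prove this lemma---it is imported verbatim from Belovs, Rosmanis and Smotrovs \cite{BRS07}---so the only question is whether your argument stands on its own. It does not: the central reduction rests on a false equivalence. You claim that $L$ is accepted by a $k$-letter 1QFA with bounded error iff the $k$-window expansion $M_k$ of its minimal DFA is a permutation automaton. Take $L=(a+b)^{*}b$, the standard example (quoted in Section 3 of this very paper) of a language accepted by a $2$-letter 1QFA. Its minimal DFA has states $q_0,q_1$ with $a$ sending both to $q_0$ and $b$ sending both to $q_1$; in the expansion $M_2$ the letter $a$ maps all three reachable configurations $(q_0,\Lambda),(q_0,a),(q_1,b)$ to $(q_0,a)$, so $M_2$ is not a permutation automaton (and the same failure persists for every $k$), yet the language is multi-letter acceptable. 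The criterion that actually characterizes the construction in \cite{BRS07} is injectivity of each $\delta(\cdot,\sigma)$ on each set $R_u$ of states reachable by input words ending in the fixed suffix $u$ (with $|u|=k-1$), \emph{separately for each} $u$; a partial injection on each such class extends to a unitary $U_{u\sigma}$, which is precisely how the $2$-letter QFA for $(a+b)^{*}b$ exists (e.g.\ $U_{ba}=U_{\Lambda b}=U_{ab}=X$ and $U_{aa}=U_{\Lambda a}=U_{bb}=I$). Your condition demands injectivity across different suffix classes simultaneously, which is strictly stronger, and it is exactly the extra strength that an F-construction-free DFA need not have.

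There is a second, independent gap: even granting a reduction to an MO-1QFA over the window alphabet $\Delta$, the Brodsky--Pippenger characterization cannot be invoked in either direction, because the valid window strings form a proper subset of $\Delta^{*}$ (consecutive windows must overlap in $k-1$ letters) and an MO-1QFA arising from a $k$-letter 1QFA is constrained only on those strings, its behaviour elsewhere being arbitrary. Hence the existence of a $k$-letter 1QFA tells you nothing about the minimal DFA of any language over the free monoid $\Delta^{*}$, and conversely a permutation structure on the expansion does not by itself certify acceptance of $L$. The necessity direction has to be argued directly in the Kondacs--Watrous style: from an F-construction with words $t,z$ one compares inputs of the form $wt^{N}$ and $wzt^{N}$ and uses unitarity (precompactness of the orbit of the product unitaries) to force the accepting probabilities of two words with different membership arbitrarily close together, contradicting bounded error; the sufficiency direction requires the explicit suffix-class construction sketched above rather than an appeal to \cite{BP02}. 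As written, both implications of your proof are unsupported.
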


In Figure~\ref{fig:dfa0even}, there are  an F-construction: For example,  we  consider $q_0$ and $q_3$, and  strings $00$ and $11$, from the above proposition which shows that no multi-letter 1QFA  can accept $L_0(2)$.

Therefore, similarly to Proposition \ref{prop:accbe}, we have:

\begin{Pp}\em
 If  we have to restrict $m$ to be a prime number, then for any string $z$ with $|z|=n\geq 1$ there exists a regular language $L_z(m)$ that can not be accepted by any multi-letter 1QFA, but for every $\varepsilon$ there exists a 1QFAC ${\cal A}_m$  with $n+1$ classical states (independent of $m$) and $O(\log(m))$ quantum basis states such that if $x\in L_z(m)$, $x$ is accepted with no error, and the probability for accepting $x\not\in  L_z(m)$ is smaller than  $\varepsilon$. In contrast, the minimal DFA accepting $L_z(m)$ has $m(n+1)$ states.
\end{Pp}

\section{Determining the equivalence of 1QFAC}

In this section, we consider the equivalence problem of 1QFAC. For
any given 1QFAC ${\cal A}_{1}$ and 1QFAC ${\cal A}_{2}$ over the
same finite input alphabet $\Sigma$ and finite output alphabet
$\Gamma$, our purpose is to determine whether or not they are equivalent according to the following definition.

\begin{Df}\em
A 1QFAC ${\cal A}_1$ and another 1QFAC ${\cal A}_2$ over the same
input alphabet $\Sigma$ and output alphabet $\Gamma$ are said to
be {\em equivalent} (resp. {\em $t$-equivalent}) if $\Prob_{{\cal
A}_1,\gamma}(w)=\Prob_{{\cal A}_2,\gamma}(w)$ for any $w\in \Sigma^{*} $ (resp.
for any input string $w$ with $ |w|\leq t$) and any $\gamma\in\Gamma$.
\end{Df}

In the following, we will present a method to determine whether or not any two 1QFAC are equivalent.
 For readability, we recall a mathematical model which  is not an actual
computing model but generalizes many classical computing models,
 including probabilistic automata \cite{Rab63,Paz71} and deterministic finite automata \cite{HU79}.
\begin{Df}\label{def:BLM}\em
 A {\it bilinear  machine}
(BLM)  over the alphabet $\Sigma$ is a tuple $${\cal M}=( S,
\pi,\{M(\sigma)\}_{\sigma\in\Sigma},\eta),$$ where $S$ is a finite
state set with $|S|=n$, $\pi\in \mathbb { C}^{n\times 1}$, $\eta\in
\mathbb { C}^{1\times n}$ and $M(\sigma)\in \mathbb { C}^{n\times n}$ for
$\sigma\in \Sigma$.
\end{Df}
Associated to  a BLM ${\cal M}$, the {\it word function} $f_{\cal
M}:\Sigma^{*}\rightarrow \mathbb { C}$ is defined in the way: $f_{\cal
M}(w)=\eta M(w_m)\dots M(w_1)\pi$, where  $w=w_1\dots w_m\in
\Sigma^{*}$.

\begin{Df}\em
Two BLM ${\cal M}_1$ and
${\cal M}_2$ over the same alphabet $\Sigma$ are said to be
equivalent (resp. $k$-equivalent) if $f_{{\cal M}_1}(w)=f_{{\cal
M}_2}(w)$ for any $w\in \Sigma^{*} $ (resp. for any input string
$w$ with $ |w|\leq k$).
\end{Df}

As indicated in \cite{LQ08}, if we refer to \cite{Paz71,Tze92}, then we can  find that one can get a
general result as follows.
\begin{Pp}[\cite{Paz71,Tze92}] \label{BLM}
Two BLM  ${\cal A}_1$ and
${\cal A}_2$ with $n_1$ and $n_2$ states, respectively, are
equivalent if and only if they are $(n_1+n_2-1)$-equivalent.
Furthermore, there exists a polynomial-time algorithm running in
time $O((n_1+n_2)^4)$ that takes as input two BLM ${\cal A}_1$
and ${\cal A}_2$ and determines whether ${\cal A}_1$ and ${\cal
A}_2$ are equivalent.
\end{Pp}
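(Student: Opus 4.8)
The plan is to reduce the equivalence question to a finite-dimensional linear-algebra computation, which is the standard idea behind the Paz--Tzeng argument. Given BLM $\mathcal{M}_1=(S_1,\pi_1,\{M_1(\sigma)\},\eta_1)$ and $\mathcal{M}_2=(S_2,\pi_2,\{M_2(\sigma)\},\eta_2)$ with $|S_1|=n_1$ and $|S_2|=n_2$, I would form the block-diagonal matrices $N(\sigma)=\begin{pmatrix}M_1(\sigma)&0\\0&M_2(\sigma)\end{pmatrix}$ of size $n_1+n_2$, the column vector $\eta=\begin{pmatrix}\eta_1\\-\eta_2\end{pmatrix}$, and, for each word $w=w_1\cdots w_m$, the row vector $v(w)=(\pi_1 M_1(w_1)\cdots M_1(w_m),\,\pi_2 M_2(w_1)\cdots M_2(w_m))\in\mathbb{C}^{1\times(n_1+n_2)}$. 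First I would record the two identities $v(w\sigma)=v(w)N(\sigma)$ (with $v(\epsilon)=(\pi_1,\pi_2)$) and $f_{\mathcal{M}_1}(w)-f_{\mathcal{M}_2}(w)=v(w)\,\eta$. Thus $\mathcal{M}_1$ and $\mathcal{M}_2$ are equivalent if and only if every vector in $\{v(w):w\in\Sigma^*\}$ is annihilated by $\eta$, equivalently if and only if the subspace $V=\span\{v(w):w\in\Sigma^*\}$ is contained in the hyperplane $\eta^{\perp}$.

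Next I would control how fast $V$ is generated. Set $V_k=\span\{v(w):|w|\le k\}$, so that $V_0\subseteq V_1\subseteq\cdots$ and $V_{k+1}=V_k+\sum_{\sigma\in\Sigma}V_kN(\sigma)$. The key stabilization fact is: if $V_{k}=V_{k+1}$ for some $k$, then $V_kN(\sigma)\subseteq V_k$ for every $\sigma$, whence $V_j=V_k$ for all $j\ge k$ and in particular $V_k=V$. Since $\dim V_0\ge1$ whenever $(\pi_1,\pi_2)\neq0$ (the degenerate case being trivial), and since each strict inclusion raises the dimension by at least one while $\dim V\le n_1+n_2$, the chain must stabilize after at most $n_1+n_2-1$ steps; that is, $V=V_{n_1+n_2-1}$. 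Combined with the previous paragraph, $V\subseteq\eta^{\perp}$ holds if and only if $v(w)\eta=0$ for all $|w|\le n_1+n_2-1$, i.e. $f_{\mathcal{M}_1}(w)=f_{\mathcal{M}_2}(w)$ for all $|w|\le n_1+n_2-1$. This is exactly the claim that equivalence coincides with $(n_1+n_2-1)$-equivalence.

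For the algorithm I would compute a basis of $V$ by a breadth-first search driven by the $N(\sigma)$, maintaining the current partial basis in reduced row-echelon form. Starting from $v(\epsilon)$, one repeatedly takes a stored basis vector $v$, forms $vN(\sigma)$ for each $\sigma\in\Sigma$, and tests whether it already lies in the span of the current basis; if not, it is reduced and appended. Because $\dim V\le n_1+n_2$, at most $n_1+n_2$ vectors are ever inserted and only $O(|\Sigma|(n_1+n_2))$ candidate vectors are tested; each matrix--vector product and each echelon reduction or membership test costs $O((n_1+n_2)^2)$, giving total cost polynomial in $n_1+n_2$, of order $O((n_1+n_2)^4)$. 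Once the basis is in hand one checks $v\,\eta=0$ for every basis vector $v$, which by linearity decides whether $V\subseteq\eta^{\perp}$ and hence whether the machines are equivalent.

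The routine parts are the two identities and the orthogonality reformulation; the step I expect to need the most care is the stabilization argument, namely proving both that the chain $(V_k)$ becomes constant as soon as two consecutive spaces coincide (the closure property $V_kN(\sigma)\subseteq V_k$) and that the dimension bookkeeping yields the precise index $n_1+n_2-1$ rather than a weaker bound. Getting this index tight is what makes the cut-off point, and hence the sharpness of the polynomial running time, exact.
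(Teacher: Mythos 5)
Your proposal is correct: the reachable-row-space construction, the stabilization of the chain $V_0\subseteq V_1\subseteq\cdots$ after at most $n_1+n_2-1$ strict inclusions, and the breadth-first basis computation are exactly the standard Paz--Tzeng argument that this proposition rests on. The paper itself gives no proof here (it imports the result by citation from Paz and Tzeng), so there is nothing to contrast with; your reconstruction, including the tight cut-off index and the polynomial running-time accounting, matches the cited argument.
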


Therefore, if we can simulate any 1QFAC by an equivalent BLM, then the equivalence problem of 1QFAC can be solved. Indeed, we can do  that by using the same technical treatments used in Section 3.1.  For the readability, below we recall these technical treatments.  Given a 1QFAC ${\cal A}=(S,Q,\Sigma,\Gamma, s_{0},|\psi_{0}\rangle, \delta,
\mathbb{U}, {\cal M})$, we construct
\begin{itemize}
  \item  ${\cal H}={\cal H}(S)\otimes {\cal H}(Q)$;
   \item $|\phi_0\rangle=|s_0\rangle|\psi_0\rangle$;
  \item $M(\sigma)=\sum_{s\in S}|\delta(s,\sigma)\rangle\langle s|\otimes U_{s\sigma}$ for $\sigma\in\Sigma$;
  \item $P_\gamma=\sum_{s\in S}|s\rangle\langle s|\otimes P_{s, \gamma}$ for each $\gamma\in\Gamma$.
\end{itemize}
 By using these notations, we have  \begin{equation}\Prob_{{\cal A},\gamma}(x)=\|P_\gamma M(x)|\phi_0\rangle\|^2 \label{Prob}\end{equation} for each $\gamma\in \Gamma$ and $x\in\Sigma^*$, where $M(x_1\cdots x_n)=M(x_n)\cdots M(x_1)$.  In the above construction, we note that $P_\gamma$ for each $\gamma\in\Gamma$ is a projective operator on ${\cal H}$. Then we assume that $$P_\gamma=\sum_j|\gamma_j\rangle \langle \gamma_j|$$  where $\{|\gamma_j\rangle\}$ is an orthonormal set.
As a result, $\Prob_{{\cal A},\gamma}(x)$ can be rewritten as
 \begin{eqnarray}  \Prob_{{\cal A},\gamma}(x) &=&\sum_j|\langle \gamma_j|M(x) |\phi_0\rangle|^ 2\\
    &=& \sum_{j}(\langle \gamma_j|\otimes \langle (\gamma_j|)^{*})(M(x)\otimes M(x)^{*})|\phi_0\rangle \otimes (|\phi_0\rangle)^{*}
\end{eqnarray}
where $*$ denotes the conjugate operation.

Therefore,  for 1QFAC ${\cal A}=(S,Q,\Sigma,\Gamma, s_{0},|\psi_{0}\rangle, \delta,
\mathbb{U}, {\cal M})$ and $\gamma\in\Gamma$, we obtain a BLM given by
\begin{equation}
BLM({\cal A},\gamma)=(S_{{\cal A}},|\phi_0\rangle \otimes (|\phi_0\rangle)^{*}, \{M(\sigma)\otimes M(\sigma)^{*}\}_{\sigma\in\Sigma},\sum_{j}(\langle \gamma_j|\otimes \langle (\gamma_j|)^{*})) \label{BLM}
\end{equation}
where $|S_{{\cal A}}|=kn$   with $|S|=k$ and $|Q|=n$, such that
\begin{equation}\Prob_{{\cal A},\gamma}(x)=f_{BLM({\cal A},\gamma)}(x)\label{Sim}\end{equation}
for $x\in\Sigma^*$.

In summary, for a 1QFAC ${\cal A}$ with the output alphabet $\Gamma$, we obtain a family of BLM $\{BLM({\cal A},\gamma): \gamma\in \Gamma\}$ which equivalently simulate  the behavior of ${\cal A}$ (i.e., Eq. (\ref{Sim}) holds for any $\gamma\in\Gamma$ and $x\in\Sigma^*$).  It  is worth stressing that all BLM in the family have the same structure except for the final vectors. Note that if ${\cal A}$ is a language acceptor, i.e., $\Gamma=\{a, r\}$, then only one $BLM({\cal A},a)$ is sufficient to simulate ${\cal A}$ since it  holds that $\Prob_{{\cal A},a}(x)+\Prob_{{\cal A},r}(x) =1$ for all $x\in\Sigma^*$. Indeed, if $|\Gamma|=m$, then $m-1$  BLM like the one given in (\ref{BLM})   are sufficient to simulate ${\cal A}$.

Based on the above discussion and  Proposition \ref{BLM}, we can obtain that two 1QFAC ${\cal A}_{1}=(S_{1},Q_{1},\Sigma,\Gamma,
s_{0},|\psi_{0}^{(1)}\rangle, \delta_{1}, \mathbb{U}_{1}, {\cal
M}_{1})$ and ${\cal A}_{2}=(S_{2},Q_{2},\Sigma,\Gamma,
t_{0},|\psi_{0}^{(2)}\rangle, \delta_{2}, \mathbb{U}_{2}, {\cal
M}_{2})$ are equivalent if and only if they are $(k_1 n_1)^{2}+(k_2 n_2)^{2}-1$-equivalent, where $k_i$ and $n_i$ are the numbers of classical and quantum basis states of ${\cal A}_{i}$, respectively, $i=1,2$. In addition,  there exists a polynomial-time algorithm running in
time $O([(k_1 n_1)^{2}+(k_2 n_2)^{2}]^4)$ that takes as input two 1QFAC ${\cal A}_1$
and ${\cal A}_2$ and determines whether ${\cal A}_1$ and ${\cal
A}_2$ are equivalent. We formulate this result as follows.

\begin{thm} \label{thm:bilbound}
Two 1QFAC ${\cal A}_1$ and
${\cal A}_2$  are
equivalent if and only if they are $(k_1 n_1)^{2}+(k_2 n_2)^{2}-1$-equivalent.
Furthermore, there exists a polynomial-time algorithm running in
time $O([(k_1 n_1)^{2}+(k_2 n_2)^{2}]^4)$ that takes as input two 1QFAC ${\cal A}_1$
and ${\cal A}_2$ and determines whether ${\cal A}_1$ and ${\cal
A}_2$ are equivalent, where  $k_i$ and $n_i$ are the numbers of classical and quantum basis states of ${\cal A}_{i}$, respectively, $i=1,2$.

\end{thm}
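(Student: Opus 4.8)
The plan is to reduce the equivalence problem for 1QFAC directly to the equivalence problem for bilinear machines and then invoke Proposition~\ref{BLM}, since essentially all the structural work has already been carried out before the statement. By Remark~\ref{rm:top} each input machine ${\cal A}_i$ is reformulated as a topological automaton $TP({\cal A}_i)=(c_0,(X_\sigma)_\sigma,m)$ over a configuration space of dimension $k_i n_i$, and the bilinearization identity
\[
\Prob_{{\cal A}_i,\gamma}(x)=c_0^{\dag}X_x^{\dag}PX_x c_0=(c_0\otimes c_0^{*})(X_x\otimes X_x^{*})\sum_{r,j}|\psi_{s_r,j}\rangle\otimes|\psi_{s_r,j}\rangle^{*}
\]
exhibits the accepting probability as the word function of the associated machine $BLM({\cal A}_i)$. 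The first thing I would make explicit is the dimension count: because each operator is tensored with its conjugate, $BLM({\cal A}_i)$ acts on a space of dimension $(k_i n_i)^2$, so it has $(k_i n_i)^2$ states; this is precisely what is needed for the claimed bound to come out correctly.

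Next I would fix the output symbol. For each $\gamma\in\Gamma$ the projector $P=P_\gamma$ yields a bilinear machine whose word function is exactly $\Prob_{{\cal A}_i,\gamma}(\cdot)$; hence ${\cal A}_1$ and ${\cal A}_2$ are equivalent if and only if, for every $\gamma\in\Gamma$, the corresponding bilinear machines are equivalent, and the same correspondence holds for $t$-equivalence at each fixed $t$. Since $\Gamma$ is finite, this contributes only a constant factor. I would then apply Proposition~\ref{BLM} to the two bilinear machines, which have $(k_1 n_1)^2$ and $(k_2 n_2)^2$ states respectively: they are equivalent if and only if they are $((k_1 n_1)^2+(k_2 n_2)^2-1)$-equivalent, and their equivalence is decidable in time $O\big(((k_1 n_1)^2+(k_2 n_2)^2)^4\big)$. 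Transporting these two conclusions back through the correspondence of the previous sentence yields exactly the cutoff length and running time stated in Theorem~\ref{thm:bilbound}.

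The main obstacle---although it has in effect already been discharged in the paragraphs preceding the statement---is verifying that the square-of-norm measure genuinely is a bilinear word function, i.e. that the quadratic dependence on $X_x$ in $\|P_\gamma X_x c_0\|^{2}$ is linearized correctly by the vectorization $A\mapsto A\otimes A^{*}$, and that the block structure of $X_\sigma$ (which simultaneously encodes the classical transition $\delta$ and the unitary $U_{s\sigma}$) is faithfully preserved under this tensoring. A secondary point to check is the ordering convention: $v(x)=U_{\mu(\sigma_1\cdots\sigma_{m-1})\sigma_m}\cdots U_{s_0\sigma_1}$ composes right-to-left, whereas $f_{\cal M}(w)=\pi M(w_1)\cdots M(w_m)\eta$ reads left-to-right, so one must reverse the product (or equivalently transpose) to align the two forms. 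This is routine but necessary bookkeeping, and once it is settled the theorem follows as an immediate corollary of Proposition~\ref{BLM}.
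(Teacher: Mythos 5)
Your proposal is correct and follows essentially the same route as the paper: reformulate each 1QFAC as a topological automaton per Remark~\ref{rm:top}, bilinearize $\|P_{\gamma}X_{x}c_{0}\|^{2}$ via $X_{x}\mapsto X_{x}\otimes X_{x}^{*}$ to obtain a BLM with $(k_{i}n_{i})^{2}$ states for each fixed $\gamma$, and invoke Proposition~\ref{BLM}. Your explicit remarks on the dimension count, the per-$\gamma$ reduction over the finite alphabet $\Gamma$, and the ordering convention are sound bookkeeping that the paper leaves implicit.
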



\section{Minimization of 1QFAC}\label{sec5}

In this section we show that the minimization of 1QFAC is decidable. The result relies on Theorem~\ref{thm:bilbound} and on the decidability of the theory of real ordered fields \cite{GV88, Ren88}.
It also requires that {\em we only use algebraic complex numbers when defining automata}.
This does not raise theoretical difficulties because all quantum states reachable by such an automaton remain in the linear space over the field of algebraic complex numbers. Furthermore, this assumption is not a practical restriction. Indeed, the set of algebraic complex numbers is dense. Moreover, it contains all floating-point numbers and all rational numbers. Finally, the usual set of universal quantum gates is defined only with algebraic complex numbers~\cite{boy:99}.

Indeed, the present method has already been used for the minimization of multi-letter 1QFA \cite{QLZ11}, MO-1QFA and MM-1QFA \cite{PQL12}. However, 1QFAC contain both classical and quantum states, and both states will be considered to be minimized simultaneously. In the interest of readability, we would describe the minimization process of 1QFAC in detail.

We start by briefly recalling the decision problem for the existential theory of the reals \cite{GV88}, that is, the problem of deciding if the set $\mathbb{S}=\{x\in \mathbb{R}^n: \mathbf{P}(x)\}$ is non-empty, where $\mathbf{P}(x)$ is a predicate which is a Boolean function of atomic predicates either of the form $f_i(x)\geq 0$ or $f_j(x)>0$, the $f'$s being real polynomials.  After  \cite{GV88}, many authors have studied this problem (for example, J. Canny \cite{Can88},  J. Heintz, and J. Renegar et al. \cite{Ren88}), and here we recall Renegar's result \cite{Ren88}. More precisely,  Renegar \cite{Ren88} designed an algorithm of time complexity $(nd)^{O(k)}$ solving  the problem of determining  if the set $\mathbb{S}$ defined above is non-empty, where $d$ is the degree of polynomials, $k$ the number of variables, and $n$ the number of polynomials.  Furthermore, to find
a sample of $\mathbb{S}$ requires $\tau d^{O(n)}$ space if all coefficients of the atomic predicates use at most $\tau$
space (see \cite{Ban03}, page 518), which means that finding a sample requires
exponential space on the number of variables. Let us summarize these results  in the following theorem.

\begin{thm}[\cite{Can88,Ren88,Ban03}]\label{BCR}
To decide whether the set $\mathbb{S}=\{x\in \mathbb{R}^n: \mathbf{P}(x)\}$ is non-empty, where $\mathbf{P}(x)$ is a predicate which is a Boolean function of atomic predicates either of the form $f_i(x)\geq 0$ or $f_j(x)>0$, with $f'$s being real polynomials (with integer coefficients), can be done in PSPACE in  $n, m, d$, where $n$ is the number of variables, $m$ is the number of atomic predicates, $d$ is the highest degree among all  atomic predicates of $\mathbf{P}(x)$.  Moreover, there exists an algorithm of time complexity $(md)^{O(n)}$ for this problem. To find a sample of  $\mathbb{S}$ requires  $\tau d^{O(n)}$ space if all coefficients of the
atomic predicates use at most $\tau$ space.
\end{thm}

Since 1QFAC are usually defined over the field of complex numbers,  we need to transform the problem over  the field of complex numbers to that over real numbers. That will be based on the following observation.

\begin{Rm}\label{Rm-BCR}
Any complex number $z=x+yi$ is determined by two reals $x$ and $y$, and any complex polynomial   $f(z)$ with $z\in \mathbb{C}^n$ can be equivalently written as $f(z)=f_{1}(x, y)+if_{2}(x, y)$ where $(x,y)\in \mathbb{R}^{2n}$ is the real representation of $z$, and $f_{1}$ and  $f_{2}$ are real polynomials. Thus, a system of  $n$ complex polynomial equations  with $k$ complex variables   can be equivalently described by a system of $2n$ real polynomial equations with $2k$ real variables.
\end{Rm}
Of course, regarding the problem of solving a system of polynomial equations, we can also refer to the work by A. Fr\"{u}hbis-Kr\"{u}ger and C. Lossen \cite{KL05,Los03}, and they studied this problem in detail.

Now we  deal with the minimization of 1QFAC.
Consider the set $\mathrm{N}=\{(k,n):k\geq 1, n\geq 1\}$ where $k,n$ are integer. Then the number of states of any 1QFAC belongs to $\mathrm{N}$ in which the first element denotes the number of classical states and the second one  the number of quantum basis states.

Assume we are given a well-ordered relation, say $\preceq$, over $\mathrm{N}$ where the smallest element is $(1,1)$. What follows does not depend on this choice. From a practical point of view, different choices can be justified depending on the goals of the user. For instance the user may wish to give priority to reducing the qubits needed to implement the automaton (that is, minimize $n$ even at the expense of using more classical states). In this case the user might want to use the well ordering induced by the following strict order:
$$(k,n)\prec(k',n')\textrm{ iff }k + 2^n < k' + 2^{n'} \textrm{ or } (k + 2^n = k' + 2^{n'} \text{ and } n < n').$$

Given a 1QFAC ${\cal A}=(S,Q,\Sigma,\Gamma, s_{0},|\psi_{0}\rangle, \delta,
\mathbb{U}, {\cal M})$, where we suppose that the numbers of classical states and quantum basis states are $k$ and $n$, respectively. Then, according to the well-ordered relation $\preceq$, we have $(1,1)\preceq (k_1,n_1)\preceq (k_2,n_2)\preceq\ldots\preceq (k,n)$ for all elements $(k_i,n_i)$ ``smaller" than $(k,n)$. The minimization of 1QFAC ${\cal A}$ is to search for the minimal pair  $(k_i,n_i)\preceq (k,n)$ for which there is a 1QFAC ${\cal A}_{\min}$ with $k_i$ classical states and $n_i$ quantum basis states equivalent to ${\cal A}$. To this end, we check it from $(1,1)$ to $(k_i,n_i)$ step by step.

First, we prove that, for any $(k',n')$, the problem of whether there is a 1QFAC ${\cal A}'$ with $k'$ classical states and $n'$ quantum basis states equivalent to ${\cal A}$ is decidable. Without loss of generality, for simplicity, we consider ${\cal A}$ to be an acceptor, i.e., $\Gamma=\{a,r\}$.

\begin{Lm} \label{decidable} Given an acceptor 1QFAC ${\cal A}$ and given a pair $(k',n')$, the problem of whether there exists an equivalent acceptor 1QFAC ${\cal A}'$ with $k'$ classical states and $n'$ quantum basis states is decidable in EXPTIME in $n'$ and $k'$.
\end{Lm}

\begin{proof} Suppose ${\cal A}=(S,Q,\Sigma,\Gamma, s_{0},|\psi_{0}\rangle, \delta,
\mathbb{U}, {\cal M})$ where $\Gamma=\{a,r\}$.  Let $S'$ denote a set of classical states and $Q'$ a set of quantum basis states, where $|S'|=k'$ and  $|Q'|=n'$. We know that the number of different mappings from $S'\times \Sigma$ to $S'$ is $(k')^{k'\times |\Sigma|}$. For any given mapping $\delta': S'\times \Sigma\rightarrow S'$, we check whether there is a 1QFAC ${\cal A}'$ with transition $\delta'$ equivalent to ${\cal A}$. We will prove this is decidable. If there is  a 1QFAC ${\cal A}'$ with transition $\delta'$ equivalent to ${\cal A}$, then we claim that the state complexity of ${\cal A}$ can be reduced to $(k',n')$. If  for any $\delta'$, there is no 1QFAC ${\cal A}'$ with transition $\delta'$ equivalent to ${\cal A}$, then the state complexity of ${\cal A}$ can not be reduced to $(k',n')$. Thus, the key is to prove that the problem of whether there is a 1QFAC ${\cal A}'$ with transition $\delta'$ equivalent to ${\cal A}$ is decidable.

Suppose that there exists such a ${\cal A}'$ with transition $\delta'$ equivalent to ${\cal A}$.
We let ${\cal A}'=(S',Q',\Sigma,\Gamma, s_{0}',|\psi_{0}'\rangle, \delta',
\mathbb{U}', {\cal M}')$ where, for each $s'\in S'$ and each $\sigma\in \Sigma$, $U_{s'\sigma}\in \mathbb{U}'$ and suppose
\[ U_{s'\sigma}=[x_{ij}(s'\sigma)]
\]
which is an $n'\times n'$ unitary matrix and therefore satisfies
\begin{equation}
[x_{ij}(s'\sigma)]\times [x_{ij}(s'\sigma)]^{\dag}=I \label{m1}
\end{equation}
where $\dag$ denotes the conjugate transpose operation.   Thus by Remark \ref{Rm-BCR} we can use $2{n'}^2$ real polynomial equations with $2{n'}^2$ real variables to describe that  $U_{s'\sigma}$ is a unitary matrix. Note that we should describe $U(\sigma)$ for every $\sigma\in\Sigma$ and every $s'\in Q'$.   Thus, the number of $U_{s'\sigma}$'s is $k'|\Sigma|$.

Suppose $|\psi_{0}'\rangle=[y_1,y_2,\cdots, y_{n'}]^{T}$ where $T$ denotes the transpose operation. Then
\begin{equation}
\sum_{i=1}^{n'}y_iy_i^*=1 \label{m2}
\end{equation}
where $*$ denotes the conjugate operation. Thus  we can use two real polynomial equations with $2n'$ real variables to describe that $|\psi_{0}'\rangle$ is a unit vector in $\mathbb{C}^{n'}$.

Regarding the  projection measurement set ${\cal M}'=\{{\cal M}_{s'}\}_{s'\in S'}$, there are also finite cases since $Q'$ is finite. More exactly, there are $2^{n'}$ cases for each $s'\in S'$.  Also, we need to check it for each case for each $s'\in S'$. Suppose that ${\cal M}_{s'}=\{P_{s',a}, I-P_{s',a}\}$. Then we can  describe $P_{s',a}$ as follows:
\begin{equation}P_{s',a}=\sum_{i=1}^{n'}z_{(s',i)}|q_i\rangle\langle q_i|\end{equation}
with
\begin{equation} z_{(s',i)}=1 \text{~or~} z_{(s',i)}=0,\label{P} \end{equation}
where  $z_{(s',i)}=1$ means that the state $q_i\in Q$ should be regarded as  an accepting state, otherwise a rejecting state. Therefore, for each $s'\in S'$ we can use $2n'$ real polynomial equations with $n'$ real variables to describe the projective measurement ${\cal M}_{s'}$.

Since ${\cal A}'$ is equivalent to ${\cal A}$, for each $x\in \Sigma^*$ with $|x|\leq(k n)^{2}+(k' n')^{2}-1$, by Theorem \ref{thm:bilbound} we have the following equations:
\begin{equation}
\|P_{\mu'(x),a}v'(x)|\psi_{0}'\rangle\|^2=\|P_{\mu(x),a}v(x)|\psi_{0}\rangle\|^2 \label{m3}
\end{equation}
where $\mu'(x)$ and $\mu(x)$ denote respectively the classical states of ${\cal A}'$ and ${\cal A}$  after reading $x$,  $v'(x)$ and $v(x)$ the unitary operators ${\cal A}'$ and ${\cal A}$  for reading $x$ defined as Eq. (\ref{v}), respectively. Note that in the above equation, the right side is   a fixed value for the given 1QFAC ${\cal A}$ (of course, some time is need to compute this value for the given ${\cal A}$), and  the left side can be rewritten as
\begin{equation}
\Prob_{{\cal A},a}(x)=\sum_{i=1}^{n'} z_{(\mu'(x),i)}\langle q_i|\otimes\langle q_i| v'(x)\otimes v '(x)^*|\psi_{0}'\rangle\otimes|\psi_{0}'\rangle^*
\end{equation}
where we assume that $P_{\mu'(x),a}=\sum_{i=1}^{n'}z_{(\mu'(x),i)}|q_i\rangle\langle q_i|.$ Thus, the left side of Eq. (\ref {m3}) can be described by a real polynomial, of which the degree is $2|x|+3$. Note that to describe the fact that ${\cal A}'$ and ${\cal A}$ are equivalent, the total number of polynomial equations needed is
\begin{equation}
P=|\Sigma|^1+|\Sigma|^2+\cdots+ |\Sigma|^{(k n)^{2}+(k' n')^{2}-1}.
\end{equation}

The above statements and analysis can now  be summarized as follows:
  for a given  acceptor    1QFAC ${\cal A}$ over an input alphabet $\Sigma$, another 1QFAC ${\cal A}'$ with a given classical transition function $\delta'$ that is equivalent to ${\cal A}$ can be represented by a vector $x\in \mathbb{R}^{ 2k'|\Sigma|n'^2+(2+k')n'}$,  which is restricted by  these real polynomial equations from Eqs. (\ref{m1},\ref{m2},\ref{P},\ref{m3}). The total number of the  polynomial equations needed is
 \begin{equation}
N=2+2k'|\Sigma|{n'}^2+2k'n'+P.
 \end{equation}
 The highest degree in these equations is
   \begin{equation}d=2[(k n)^{2}+(k' n')^{2}-1]+3.\end{equation} Thus, according to  Renegar's algorithm \cite{Ren88} as we reviewed above (Theorem \ref{BCR}), it is decidable that whether or not there exists a vector $x\in \mathbb{R}^{ 2k'|\Sigma|n'^2+(2+k')n'}$ satisfying these real polynomial equations from Eqs. (\ref{m1},\ref{m2},\ref{P},\ref{m3}), and its time complexity is
 \begin{equation}
 T=(Nd)^{O( k'|\Sigma|n'^2)}.
  \end{equation}
If it has a solution, then
${\cal A}'$ is equivalent to ${\cal A}$, from which it follows that the state complexity of ${\cal A}$ can be reduced to $(k',n')$.

In summary, if the above all cases have been checked and there is no solution for these equations (\ref{m1},\ref{m2},\ref{P},\ref{m3}), then we can conclude that the state complexity of ${\cal A}$ can not be reduced to $(k',n')$. Otherwise, the state complexity of ${\cal A}$ can  be reduced to $(k',n')$.
\end{proof}

Using Lemma \ref{decidable}, the envisaged result is immediate.

\begin{thm} Given an acceptor 1QFAC ${\cal A}$ with $k$ classical states and $n$ quantum basis states, the minimization problem of ${\cal A}$ is decidable in EXPSPACE on $k$ and $n$.
\end{thm}

\begin{proof} Suppose that ${\cal A}$  has an input alphabet $\Sigma$. For a pair $(k',n')$  chosen from $(1,1)$ to $(k,n)$, we construct a classical state set $S'$ such that $|S'|=k'$. Then as mentioned before the number of different mappings from $S'\times \Sigma$ to $S'$ is $(k')^{k'\times |\Sigma|}$, and we denote the set of all these mappings by $Map(S',\Sigma,S')$. For each $\delta'\in Map(S',\Sigma,S')$, we define the set
\begin{eqnarray}
\mathbb{S}_{{\cal A},\Sigma}^{(k',n',\delta')}&=&\{{\cal A}': {\cal A}' \text{~is a 1QFAC equivalent to~}{\cal A} \text{ over~} \Sigma \text{,~with~} \text{state number pair~} \nonumber \\ &~~&(k',n') \text{~ and with classical transition function~} \delta'\}\nonumber.
\end{eqnarray}
Thus by taking $(k',n')$ from $(1,1)$ to $(k,n)$, by Lemma \ref{decidable} we check whether or not 1QFAC ${\cal A}$ can be reduced to another 1QFAC ${\cal A}'$ with  $k'$ classical states and $n'$ quantum basis states.  The minimization algorithm is now depicted as follows:
\begin{center}

 Algorithm for the minimization of 1QFAC. \vskip 1mm
 \fbox{\parbox{\textwidth}{
 {\small\vskip 1mm
{\bf Input:} a 1QFAC ${\cal
A}$ with state number pair $(k,n)$\\
{\bf Output:} a minimal 1QFAC ${\cal
A}^{'}$ equivalent to ${\cal
A}$ with respect to a well-ordered relation $\preceq$\\
{\bf Step 1:}  \begin{quote} Take $(k',n')$ from $(1,1)$ to $(k,n)$.
\begin{quote}
Take $\delta'\in Map(S',\Sigma,S')$
\begin{quote} If ($\mathbb{S}_{{\cal A},\Sigma}^{(k',n',\delta')}$ is not empty) return ${\cal A}' =$ sample $\mathbb{S}_{{\cal A},\Sigma}^{(k',n',\delta')}$\end{quote}\end{quote}
\end{quote}
{\bf Step 2:}
 \begin{quote} return  ${\cal
A}'={\cal
A}$
\end{quote}

}}}
\end{center}

In the above algorithm,
the worst-case time complexity is $O(k'^{k'\times |\Sigma|}\times T)$ for checking whether a given $(k',n')$ has an automaton equivalent to ${\cal A}$.  If such an automaton exists, we can furthermore give a description on the automaton, i.e., to  find a sample of $\mathbb{S}_{{\cal A},\Sigma}^{(k',n',\delta')}$. According to Theorem \ref{BCR}, to find a sample needs exponential space. Anyway, we have presented an algorithm to find a minimal 1QFAC equivalent to a given 1QFAC.
\end{proof}

As we know, when a 1QFAC has only one classical state, it is exactly an MO-1QFA. Therefore, we obtain the minimization of MO-1QFA using the obvious well-ordering.

\begin{Co}
Given an MO-1QFA ${\cal A}$ with $n$ quantum basis states, the minimization problem of ${\cal A}$  is decidable.
\end{Co}

\begin{Rm}
The minimization problem of MO-1QFA  was proposed by Moore and Crutchfield (see \cite{MC00}, page 304, Problem 5) and we here present an answer to this problem. Since $k=1$, the worst-case  time  complexity is  $O\left({\left(n^4|\Sigma|+n^2|\Sigma|^{n^2}\right)}^{|\Sigma|n^2}\right)$.
\end{Rm}

\section{Conclusions and problems}

In this paper, we proposed a new model for one-way QFA, which we called 1QFAC. Such automata can accept all regular languages with no error, and, moreover, they can accept with bounded error some languages with essentially less states than DFA and for which there is no MO-1QFA, nor MM-1QFA, nor  multi-letter 1QFA accepting them. 1QFAC contain both classical and quantum components and  therefore, 1QFAC  inherit the characteristics of  DFA but improved on them by employing quantum computing. From the practical point of view,  1QFAC can be as physically realizable as MO-1QFA , and therefore it is, to a certain extent, a practical model of quantum computing with finite memory.


In detail, we addressed  the lower-bound state complexity problem of 1QFAC, and showed that, if $L$ is accepted by a 1QFAC ${\cal M}$ with bounded error, then $kn=\Omega( \log m)$ where $k$ and $n$ denote  numbers of classical states and quantum states of ${\cal M}$, respectively, and $m$ is the state number of the minimal DFA accepting $L$. We have proved this lower bound is tight (Proposition \ref{prop:accbe}). Indeed, we verified that, for any prime number $m\geq 2$, there exist some regular languages $L_m$ whose minimal DFA  needs $O(m)$ states, and there is no MO-1QFA, nor MM-1QFA nor  multi-letter 1QFA that  can accept $L_m$, but there exists 1QFAC accepting $L_m$ with only constant classical states (independent of $m$) and $\log m$ quantum basis states.
Also, we have proved that
any two 1QFAC ${\cal A}_1$ and ${\cal A}_2$ are equivalent if and
only if they are $(k_{1}n_1)^2+(k_{2}n_2)^{2}-1$-equivalent, where $k_{1}$ and
$k_{2}$ are the numbers of classical states of ${\cal A}_1$ and
${\cal A}_2$, as well as  $n_{1}$ and
$n_{2}$ are the numbers of quantum basis states of ${\cal A}_{1}$ and ${\cal
A}_{2}$, respectively; in addition, there exists a
polynomial-time $O((k_{1}n_1)^2+(k_{2}n_2)^{2})^{4})$ algorithm for determining their
equivalence. Finally, we have shown that minimization of 1QFAC is decidable in EXPSPACE. As a corollary of this result, we have also shown that the minimization problem of MO-1QFA is decidable, a problem proposed by Moore and Crutchfield (see \cite{MC00}, page 304, Problem 5).

To conclude, we would like to pose some open problems for further consideration.

\begin{itemize}

\item State complexity of 1QFAC: For any given regular language $L$, if the minimal number of states of the DFA accepting $L$ is $n$, then for any $n_{1}< n$, whether or not there exists a 1QFAC accepting $L$  with $n_{1}$ classical states and some quantum basis states?

 \item 1QFA {\it with control languages} (1QFACL), the ancilla 1QFA in \cite{Pas00}, and the Ciamarra 1QFA in
\cite{Cia01} also accept all regular languages \cite{LQZLW09}, and Remark \ref{1QFACL} shows a certain advantage of 1QFAC over 1QFACL in state complexity. Compare the state complexity of 1QFAC with these 1QFA in detail, and discover more languages to verify the advantage of 1QFAC over MO-1QFA or other 1QFA concerning the space-efficiency of states? (Here we would like to stress MO-1QFA because 1QFAC may be thought of as an generalization of MO-1QFA with inheriting the component of classical DFA.)

\item What would be the consequences of relaxing the notion of equivalence between automata to equivalence up to $\varepsilon$? More precisely, for instance, one should investigate the equivalence problem when two automata are considered equivalent iff their acceptance probability distributions over the strings do not differ more than $\varepsilon$ at each string.


\end{itemize}

\subsubsection*{Acknowledgments}
This work is supported in part by the National
Natural Science Foundation (Nos. 61272058, 61073054, 60873055, 61100001), the Natural
Science Foundation of Guangdong Province of China (No.
10251027501000004),   the Specialized Research Fund for the Doctoral Program of Higher Education of China
(Nos. 20100171110042, 20100171120051), the Fundamental Research Funds for the Central Universities (No. 11lgpy36), and the project
of  SQIG at IT, funded by FCT and EU FEDER projects QSec
PTDC/EIA/67661/2006, AMDSC UTAustin/MAT/0057/2008, NoE Euro-NF,
and IT Project QuantTel, FCT project PTDC/EEA-TEL/103402/2008
QuantPrivTel, FCT PEst-OE/EEI/LA0008/2013.


\begin{thebibliography}{ABCD}

\bibitem{ABGKMT}A. Ambainis, M. Beaudry, M. Golovkins, A. Kikusts,
M. Mercer, D. Th\'{e}rien, Algebraic results on quantum automata,
Theory of Computing Systems 39 (2006) 165-188.

\bibitem{AF98}A. Ambainis, R. Freivalds, One-way quantum finite automata: strengths, weaknesses and generalizations, in:
Proceedings of the 39th Annual Symposium on Foundations of
Computer Science, IEEE Computer Society Press, Palo Alfo,
California, USA, 1998, pp. 332-341.

\bibitem{AN09}A. Ambainis, N. Nahimovs, Improved constructions of quantum automata,
Theoretical Computer Science 410 (2009) 1916-1922.


\bibitem{AW02}A. Ambainis,  J. Watrous, Two-way finite automata with quantum and classical
states, Theoretical Computer Science 287 (2002) 299-311.

\bibitem{AG00}F. Ablayev, A. Gainutdinova, On the Lower Bounds for One-Way Quantum
Automata, in: Proceedings of the 25th International Symposium on
Mathematical Foundations of Computer Science (MFCS'2000), Lecture Notes in Computer Science,
Vol. 1893, Springer, Berlin, 2000, pp. 132-140.


\bibitem{Ben73} C. H. Bennett,    Logical reversibility of computation, IBM J. Res.
Dev.  17 (1973) 525-532.


\bibitem{boy:99}P. O. Boykin, T. Mor, M. Pulver, V. P. Roychowdhury, and F. Vatan, On Universal and Fault-Tolerant Quantum Computing: A Novel Basis and a New Constructive Proof of Universality for Shor's Basis, in:
Proceedings of the 40th Annual Symposium on Foundations of
Computer Science, IEEE Computer Society Press, New York, USA, 1999, pp. 486-494.



\bibitem{Boz03} S. Bozapalidis, Extending stochasic and quantum functions, Theory Computing Systems 36 (2003) 183-197.

\bibitem{Ban03}
S. Basu,  R. Pollack, and M.-F. Roy, Algorithms in Real Algebraic
Geometry, 2nd Edition, Springer, 2006.

\bibitem{BRS07}A. Belovs, A. Rosmanis, and J. Smotrovs,
Multi-letter Reversible and Quantum Finite Automata, in:
Proceedings of the 13th International Conference on Developments
in Language Theory (DLT'2007), Lecture Notes in Computer Science,
Vol. 4588, Springer, Berlin, 2007, pp. 60-71.


 \bibitem{Ben80}P. Benioff, The computer as a physical system: a microscopic quantum mechanical Hamiltonian model of computers as
represented by Turing machines,  Journal  of\ Statistic\ Physics
22 (1980) 563-591.\


\bibitem{BMP03}A. Bertoni, C. Mereghetti, B. Palano, Quantum
Computing: 1-Way Quantum Automata, in: Proceedings of the 9th
International Conference on Developments in Language Theory
(DLT'2003), Lecture Notes in Computer Science, Vol. 2710,
Springer, Berlin, 2003, pp. 1-20.\



\bibitem{BMP05}A. Bertoni, C. Mereghetti, B. Palano, Small size quantum automata recognizing some
regular languages, Theoretical Computer Science 340 (2005) 394-407.

\bibitem{BP02}A. Brodsky, N. Pippenger, Characterizations of 1-way quantum finite automata, SIAM Journal on Computing  31 (2002) 1456-1478.\


\bibitem{Can88}
J. Canny,  Some algebraic and geometric computations in PSPACE, in STOC
'88: Proceedings of the twentieth annual ACM symposium on Theory of computing,
ACM, New York,  USA, 1988, pp. 460-469.

\bibitem{Cia01}  M. P.  Ciamarra,   Quantum Reversibility and a New Model of
Quantum Automaton, in Proceeding of 13th International Symposium
on Fundamentals of Computation Theory, Lecture Notes in Computer
Science, Vol. 2138, Springer-Verlag, Berlin, 2001, pp. 376-379.




\bibitem{Deu85} D. Deutsh, Quantum theory, the Church-Turing principle and the universal quantum computer, Proceedings of the Royal Society of London Series A 400 (1985) 97-117.\


\bibitem{Fey82} R.P. Feynman,  Simulating physics with computers,  International \ Journal of Theoretical\ Physics  21 (1982)
467-488.\


\bibitem{FOM09}R. Freivalds, M. Ozols, L. Man\v{c}inska,  Improved constructions of mixed state quantum automata, Theoretical Computer Science 410 (2009) 1923-1931.





\bibitem{Gro96} L. Grover,  A fast quantum mechanical algorithms for datdbase
search, in:  Proc. 28th Annual ACM Symp. Theory of Computing,
Philadelphia,  Pennsylvania, 1996, pp. 212-219.




\bibitem{Gru99} J.\ Gruska, Quantum Computing, McGraw-Hill,
London, 1999.\




\bibitem{GK02} M.\ Golovkins, M. Kravtsev, Probabilistic reversible automata and quantum automata, in: Proc. 18th International Computing and Combinatorics Conference (COCOON'02),  Lecture Notes in Computer Science,
Vol. 2387, Springer, Berlin, 2002, pp. 574-583.



\bibitem{GV88}D.V. Grigov'ev, N.N. Vorobjov, Solving systems of polynomials inequalities in subexponential time,
  Journal of Symbolic Computation 5 (1-2)
(1988) 37-64.


\bibitem{Hir08} M. Hirvensalo,  Various Aspects of Finite Quantum
Automata, in:  Proceedings of DLT 2008, Lecture Notes in Computer
Science, vol. 5257,  Springer, Berlin, 2008, pp. 21-33.





\bibitem{HU79} J.E. Hopcroft,  J.D.  Ullman, Introduction to Automata Theory, Languages, and
Computation, Addision-Wesley, New York,  1979.


\bibitem{Jea07} E. Jeandel, Topological Automata, Theory Computing Systems 40 (2007) 397-407.


\bibitem{KL05}
A. Fr\"{u}hbis-Kr\"{u}ger, C. Lossen, Introduction to Computer Algebra, Lecture Notes,  Kaiserslautern University, 2005.

\bibitem{KW97} A. Kondacs, J. Watrous, On the power of
finite state automata, in: Proceedings of the 38th IEEE Annual
Symposium on Foundations of Computer Science, Miami Beach,
Florida, USA, 1997, pp. 66-75.\

\bibitem{Los03}
  C. Lossen, Singular: a computer algebra system, Computing in Science and Engineering  5 (4) (2003) 45-55.



\bibitem{LQ06}L.Z. Li, D.W. Qiu, Determination of equivalence between quantum sequential
machines,  Theoretical Computer Science 358 (2006) 65-74.

\bibitem{LQ08}L.Z. Li, D.W. Qiu, Determining the equivalence for one-way quantum finite automata,  Theoretical Computer Science 403
(2008) 42-51.



\bibitem{LQZLW09} L.Z. Li, D.W. Qiu,  et al., Characterizations of one-way general quantum finite automata,
Theoretical Computer Science 419 (17) (2012) 73-91.


\bibitem{MC00} C.\ Moore, J.P.\ Crutchfield, Quantum automata and quantum grammars,   Theoretical Computer
Science 237 (2000) 275-306.



\bibitem{MP02} C. Mereghetti, B. Palano,
On the size of one-way quantum finite automata with periodic behaviors,  Theoret. Informatics Appl. 36 (2002) 277-291.



\bibitem{MP06} C. Mereghetti, B. Palano,
Quantum finite automata with control language,  RAIRO-Inf. Theor.
Appl. 40 (2006) 315-332.



\bibitem{MP07} C. Mereghetti, B. Palano,
Quantum finite automata with control language,  Theoretical Computer Science 387 (2007) 177–186.

\bibitem{NC00}M.A. Nielsen, I.L. Chuang,  Quantum Computation and Quantum Information,
Cambridge University Press, Cambridge, 2000.




\bibitem{NY09} H. Nishimura, T. Yamakami, An application of quantum finite automata to interactive proof systems,  Journal of Computer and System Sciences 75 (4) (2009) 255-269.



\bibitem{Pas00}   K. Paschen,  Quantum finite automata using ancilla qubits,
Technical report, University of Karlsruhe, 2000.





\bibitem{Paz71} A.\ Paz,  Introduction to Probabilistic Automata, Academic Press, New York, 1971.





\bibitem{Per94} D. Perrin, Finite automata, In:  J. van Leeuwen (Eds.), Handbook of Theoretical Computer Science, Elsevier Science,  Holland, 1994, Chap. 1.




\bibitem{PQL12}P. Mateus, D.W. Qiu, L.Z. Li, On the complexity of minimizing probabilistic and quantum automata, Information and Computation 218 (2012) 36-53.


\bibitem{QL08} D.W. Qiu, L.Z. Li, P. Mateus, J. Gruska, Quantum finite automata, in: Finite State Based Models and Applications (Edited by Jiacun Wang), CRC Handbook, 2012, pp. 113-144.

\bibitem{QY09} D.W. Qiu, S. Yu, Hierarchy and equivalence of multi-letter quantum finite
automata, Theoretical Computer Science 410 (2009) 3006-3017.

\bibitem{QLZ11}	D.W. Qiu, L.Z. Li, X. Zou, P. Mateus, J. Gruska, Multi-letter quantum finite automata: decidability of the equivalence and minimization of states, Acta Informatica 48 (2011) 271-290.

\bibitem{Rab63} M. O. Rabin, Probabilistic Automata, Information and Control, 6 (3)  (1963) 230-245.


\bibitem{Ren88}

J. Renegar, A faster PSPACE algorithm for deciding the existential theory of the reals,  in:  Proc. 29th
 IEEE Annu.  Symp.  on Foundations of Computer Science, 1988, pp. 291-295.



\bibitem{RS97}G. Rozenberg, A. Salomaa (Eds.),  Handbook of Formal Languages, Vol. 1, Springer-Verlag, Berlin,
1997.



\bibitem{Sho94}   P.W. Shor,  Algorithm for quantum computation: discrete logarithms and factoring, in:  Proc. 37th
 IEEE Annu.  Symp.  on Foundations of Computer science, 1994, pp. 124-134.


\bibitem{Sim75}I. Simon, Piecewise testable events, in: Proc. the 2nd GI conference, Lecture Notes in Computer Science,
Vol. 33, Springer, New York, 1975.




\bibitem{Tze92} W.G. Tzeng, A Polynomial-time Algorithm for the
Equivalence of Probabilistic Automata, SIAM Journal on
Computing
 21  (2) (1992) 216-227.




\bibitem{Yu98} S. Yu, Regular Languages, In:  G. Rozenberg, A. Salomaa (Eds.), Handbook of Formal
Languages, Springer-Verlag,  Berlin, 1998, pp. 41-110.







\bibitem{YC08} {  A. Yakaryilmaz,  A.C.C. Say}, { Languages recognized with unbounded error by quantum finite
automata}, In: Proceedings of the 4th Computer Science Symposium in
Russia, Lecture Notes in Comput. Sci. 5675, Springer-Verlag,
Berlin, 2009, pp. 356-367.

\bibitem{YC10} {  A. Yakaryilmaz, A.C.C. Say}, Unbounded-error quantum computation with small space bounds,  Information and Computation 209 (6)  (2011) 873-892.
 Also, arXiv:1007.3624, 2010.



\bibitem{Ya03} T. Yamakami, Analysis of quantum functions, Internat. J. Found. Comput.
Sci. 14 (2003) 815-852,

\bibitem{ZQL12} S.G. Zheng, D.W. Qiu, L.Z. Li, Jozef Gruska,  One-way finite automata with quantum and classical states, In: H. Bordihn, M. Kutrib, and B. Truthe (Eds.), Dassow Festschrift 2012, Lecture Notes in Computer Science, Vol. 7300, Springer, Berlin, 2012,  pp. 273--290.







\end{thebibliography}
\end{document}